\documentclass[accepted]{uai2022} 

\usepackage[american]{babel}


\usepackage{amssymb, amsmath, amsthm,graphicx, algpseudocode,cancel,url,color}
\usepackage[utf8]{inputenc}
\usepackage{comment}
\usepackage{bbold}
\usepackage{tabularx}
\usepackage{wrapfig}
\usepackage{booktabs}
\usepackage{algorithm}
\usepackage{enumitem}
\usepackage{natbib}
\usepackage{multirow}
\usepackage{microtype}

\usepackage{hyperref}

\newcommand{\methodoneshort}{DGM-OLS}
\newcommand{\methodoneupper}{\sf dgm}
\newcommand{\methodonelong}{De-biased Gaussian Mechanism for Ordinary Least Squares }
\newcommand{\methodonebrev}{DGM}
\newcommand{\methodtwoshort}{RMGM-OLS}
\newcommand{\methodtwoupper}{\sf rmgm}
\newcommand{\methodtwolong}{Random Mixing prior to Gaussian Mechanism for Ordinary Least Squares }
\newcommand{\methodtwobrev}{RMGM}
\newcommand{\methodthreeshort}{BGM-OLS}
\newcommand{\methodthreeupper}{\sf bgm}
\newcommand{\methodthreelong}{Biased Gaussian mechanism }
\newcommand{\methodthreebrev}{BGM}

\newtheorem{assumption}{Assumption}
\newtheorem{lemma}{Lemma}
\newtheorem{theorem}{Theorem}

\newtheorem{definition}{Definition}

\newcommand{\bbE}{\mathbb{E}}
\newcommand{\bbR}{\mathbb{R}}
\newcommand{\bbP}{\mathbb{P}}

\newcommand{\bw}{\mathbf{w}}
\newcommand{\bx}{\mathbf{x}}
\newcommand{\by}{\mathbf{y}}

\newcommand{\bb}{\mathbf{b}}
\newcommand{\bc}{\mathbf{c}}
\newcommand{\calR}{\mathcal{R}}

\newcommand{\calA}{\mathcal{A}}

\newcommand{\calD}{\mathcal{D}}
\newcommand{\calM}{\mathcal{M}}
\newcommand{\calP}{\mathcal{P}}

\newcommand{\calN}{\mathcal{N}}

\DeclareMathOperator*{\plim}{plim}



\title{Differentially Private Multi-Party Data Release for Linear Regression}

%
%

\author[1]{\href{mailto:rw565@cornell.edu}{Ruihan Wu$^{*}$}}
\author[2]{Xin Yang}
\author[2]{Yuanshun Yao}
\author[2]{Jiankai Sun}
\author[2]{Tianyi Liu}
\author[1]{Kilian Q. Weinberger}
\author[2]{Chong Wang}
\affil[1]{%
    Cornell University, USA
}
\affil[2]{%
    ByteDance Inc., USA
}
\affil[*]{%
    Work done as an intern at ByteDance Inc.
}
  
  \begin{document}
\maketitle

\begin{abstract}
\emph{Differentially Private (DP) data release} is a promising technique to disseminate data without compromising the privacy of data subjects. However the majority of prior work has focused on scenarios where a single party owns all the data.
In this paper we focus on the multi-party setting, where different stakeholders own disjoint sets of attributes belonging to the same group of data subjects.
Within the context of linear regression that allow all parties to train models on the complete data without the ability to infer private attributes or identities of individuals, we start with directly applying Gaussian mechanism and show it has the small eigenvalue problem.
We further propose our novel method and prove it asymptotically converges to the optimal (non-private) solutions with increasing dataset size.
We substantiate the theoretical results through experiments on both artificial and real-world datasets.  
\end{abstract}

\section{Introduction}
\begin{figure}[t!]
\includegraphics[width=\linewidth]{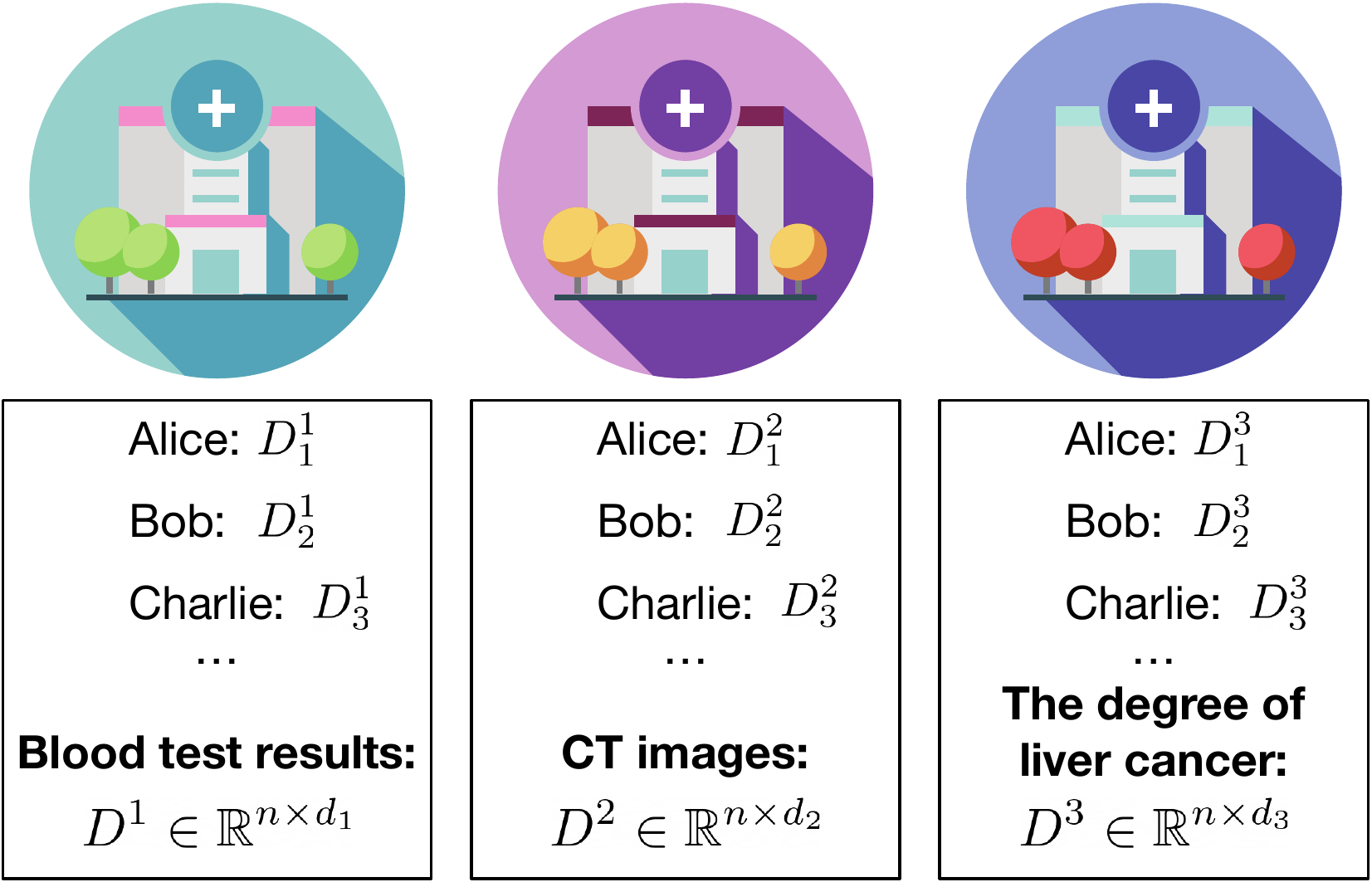}
\caption{An illustration of how data is distributed in the health care example. Clinics have the same set of patients, but different attributes such as blood test results, CT images and the degree of liver cancer.}	
\label{fig:motivating_example}
\end{figure}

The machine learning community has greatly benefited from open and public datasets~\citep{chapelle2011yahoo, real2017youtube, fast2017long, klcw20}. Unfortunately the privacy concern of data release significantly limits the feasibility of sharing many rich and useful datasets to the public, especially in privacy-sensitive domains like health care, finance, and government etc. This restriction considerably slows down the research in those areas as well as the general machine learning research given many of today's algorithms are data-hungry. Recently, legal and moral concerns on protecting individual privacy become even greater. Most countries have imposed strict regulations on the usage and release of sensitive data, e.g. CCPA~\citep{ccpa}, HIPPA~\citep{act1996health} and GDPR~\citep{GDPR}. The tension between protecting privacy and promoting research drives the community as well as many ML practitioners into a dilemma.

\textit{Differential privacy} (DP) 
\citep{dwork2011firm, dwork2006calibrating, dwork2014algorithmic, sheffet2017differentially, lee2019synthesizing, xu2017dppro, kenthapadi2012privacy} is shown to be a promising direction to release datasets while protecting individual privacy. DP provides a formal definition of privacy to regulate the trade-off between two conflicting goals: protecting sensitive information and maintaining data utility. 
In a DP data release mechanism, the shared dataset is a function of the aggregate of all private samples and the DP guarantees regulate how difficult for anyone to infer the attributes or identity of any individual sample. 
With high probability, the public data would be barely affected if any single sample were replaced. 

Despite the ongoing progress of DP data release, the majority of the prior work mainly focuses on the single-party setting which assumes there is only one party that would release datasets to the public. However in many real-world scenarios, there exist multiple parties who own data relevant to each other and want to collectively share the data as a whole to the public. For example, in health care domain, some patients may visit multiple clinics for specialized treatments (Figure~\ref{fig:motivating_example}), and each clinic only has access to its own attributes (e.g. blood test and CT images) collected from the patients. For the same set of patients, attributes combined from all clinics can be more useful to train models. In general, the multi-party setting assumes multiple parties own disjoint sets of attributes (features or labels) belonging to the same group of data subjects (e.g. patients). 

One straightforward approach to release data in a multi-party setting is combining data from all parties in a centralized place (e.g. one of the data owners or a third-party), and then releasing it using a private single-party data release approach. However, in a privacy-sensitive organization like a clinic, sending data to another party is prohibited by policy. An alternative approach is to let each party individually release its own data to the public through adding sample-wise Gaussian noise, and then ML practitioners can combine the data together to train models. However the resulting models trained on the data combined in this way would show a significantly lower utility compared to the models trained on non-private data (confirmed by experiments in Section~\ref{sec:exp}). To bridge this utility gap, we propose new algorithms specifically designed for multi-party setting.

In summary, we study DP data release in multi-party setting where parties share attributes of the same data subjects publicly through a DP mechanism. 
It protects the privacy of all data subjects and can be accessed by the public, including any party involved. To this end, we propose the following two differentially private algorithms, both based on Gaussian DP Mechanism~\citep{dwork2014algorithmic} within the context of linear regression. 
First, in \textit{\methodonelong (\methodoneshort)}, each party adds Gaussian noise directly to its data. 
The learner with the public data is able to remove a calculated bias from the Hessian matrix. 
However, we show that bias removal brings the small eigenvalue problem.
Hence, we propose the second method \textit{\methodtwolong (\methodtwoshort)}. 
A random Bernoulli projection matrix is shared to all parties, and each party uses it to project its data along sample-wise dimension before adding Gaussian noise. 
We prove that both algorithms are guaranteed to produce solutions that asymptotically converge to the optimal solutions (i.e. non-private) as the dataset size increases.
Through extensive experiments on both synthetic and real-world datasets, we show the latter method achieves the theoretical claims and outperforms the first method that naively adapts Gaussian mechanism.
\section{Preliminary}
A sequence $\{ X_n \}$ of random variables in $\mathbb{R}^d$ is defined to \textit{ converge in probability} towards the random variable $X$ if for all $\beta > 0$,
$$
\lim_{n\to\infty}\bbP\left[ \lVert X_n - X\rVert > \beta \right] = 0.
$$
The norm notation $\lVert\cdot\rVert$ denotes $\ell^2$ norm in our paper. We denote this convergence as $\plim_{n\to \infty}X_n = X$.

\textit{Differential privacy} (DP; \citep{dwork2006calibrating, dwork2014algorithmic}) is a quantifiable and rigorous privacy framework, which is formally defined as follows.
\begin{definition}[$(\varepsilon,\delta)$-differential privacy]
	A randomized mechanism $\calM:\calD \to \cal R$ with domain $\calD$ and range $\calR$ satisfies $(\varepsilon,\delta)$-differential privacy if for any two adjacent datasets $D, D'\in\calD$, which differ at exactly one data point, and for any subset of outputs $S\subseteq \calR$, it holds that
	$$
	\bbP[\calM(D)\in S] \leq e^{\varepsilon}\cdot\bbP[\calM(D')\in S] + \delta.
	$$
\end{definition}

\textit{Gaussian mechanism}~\citep{dwork2014algorithmic} is a post-hoc mechanism to convert a deterministic real-valued function $f:\calD \to \bbR^m$ to a randomized algorithm with differential privacy guarantee. 
It relies on \emph{sensitivity} of $f$, denoted by $S_f$, which is defined as the maximum difference of output $\lVert f(D) - f(D') \rVert$.
We define Gaussian mechanism for differential privacy as below.

\begin{lemma}[Gaussian mechanism]
\label{lem:gm}
For any deterministic real-valued function $f:\calD \to \bbR^m$ with sensitivity $S_f$, we can define a randomized function by adding Gaussian noise to $f$:
$$
f^{dp}(D):=f(D) + R,
$$
where R is sampled from a multivariate normal distribution $\calN\left(\mathbf{0}, S_f^2 \sigma^2\cdot I\right)$. 
When $\sigma \geq \frac{\sqrt{2\log\left(1.25 / \delta\right)} }{\varepsilon}$, $f^{dp}$ is $(\varepsilon,\delta)$-differentially private for $0 < \varepsilon \leq 1$ and $\delta>0$.
\end{lemma}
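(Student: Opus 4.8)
The plan is to follow the standard privacy-loss analysis of the Gaussian mechanism. The central object is the \emph{privacy loss random variable}, defined for an output $y$ as $\mathcal{L} = \ln\left(p_D(y)/p_{D'}(y)\right)$, where $p_D$ and $p_{D'}$ are the densities of $f^{dp}(D)$ and $f^{dp}(D')$. First I would reduce to a one-dimensional problem: since $R$ has isotropic covariance $S_f^2\sigma^2 I$, the law of the output is rotationally symmetric about its mean, so by rotating coordinates I may assume the mean shift $\Delta = f(D) - f(D')$ points along the first axis. Writing $\lVert\Delta\rVert \leq S_f$ and noting that the worst case is $\lVert\Delta\rVert = S_f$, the privacy loss then depends only on the first coordinate $R_1 \sim \calN(0, S_f^2\sigma^2)$ of the noise.

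Second, I would compute $\mathcal{L}$ explicitly. Because both densities are Gaussian with the same variance and means differing by $\Delta$, the log-ratio simplifies to an affine function of $R_1$, giving $\mathcal{L} = R_1/(\sigma^2 S_f) + 1/(2\sigma^2)$ in the worst case. Hence $\mathcal{L}$ is itself Gaussian, and the event $\mathcal{L} > \varepsilon$ is equivalent to a one-sided tail event for the standard normal variable $Z = R_1/(\sigma S_f)$, namely $Z > \sigma\varepsilon - 1/(2\sigma)$.

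Third, I would invoke the standard Gaussian tail bound $\bbP[Z > t] \leq \frac{1}{\sqrt{2\pi}\,t}e^{-t^2/2}$ to show $\bbP[\mathcal{L} > \varepsilon] \leq \delta$ under the stated choice of $\sigma$. This is the step I expect to be the main obstacle: one must substitute $t = \sigma\varepsilon - 1/(2\sigma)$ and verify that the threshold $\sigma \geq \sqrt{2\log(1.25/\delta)}/\varepsilon$ forces this probability below $\delta$. The constraint $0 < \varepsilon \leq 1$ is essential here, since it controls the lower-order $1/(2\sigma)$ term and is precisely what lets the clean constant $1.25$ absorb the polynomial prefactor in the tail estimate; this is a careful but elementary calculation.

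Finally, I would convert the high-probability bound on the privacy loss into the $(\varepsilon,\delta)$-DP guarantee. For any measurable $S \subseteq \calR$, I split $\bbP[f^{dp}(D) \in S]$ according to whether $\mathcal{L} \leq \varepsilon$ or $\mathcal{L} > \varepsilon$: on the first event the density ratio is at most $e^{\varepsilon}$, yielding the term $e^{\varepsilon}\,\bbP[f^{dp}(D') \in S]$, while the second event contributes at most $\bbP[\mathcal{L} > \varepsilon] \leq \delta$. Summing the two pieces gives the desired inequality $\bbP[f^{dp}(D) \in S] \leq e^{\varepsilon}\,\bbP[f^{dp}(D') \in S] + \delta$ and completes the proof.
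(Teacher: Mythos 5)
Your proposal is correct: it is the standard privacy-loss analysis of the Gaussian mechanism (reduction to one dimension by rotational symmetry, the affine privacy-loss variable, the Gaussian tail bound absorbing the $1.25$ constant under $0<\varepsilon\leq 1$, and the split over the good/bad privacy-loss event). The paper itself gives no proof of this lemma --- it restates it and defers to the cited reference \citep{dwork2014algorithmic} --- and your argument is precisely the proof found there, so there is nothing to contrast.
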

To simplify notations, we define $\sigma_{\varepsilon, \delta}:=\frac{\sqrt{2\log\left(1.25 / \delta\right)} }{\varepsilon}$.

\textit{Johnson-Lindenstrauss lemma}~(JL; \citep{johnson1984extensions, achlioptas2003database}) is a technique to compress a set of vectors $S=\{v_1, \cdots, v_l \}$ with dimension $d$ to a lower dimension space $k < d$. 
With a proper selection $k$, it is able to approximately preserve the inner product between any two vectors in the set $S$ with high probability. 
We specifically introduce the Bernoulli version of JL Lemma, which is extended from Theorem 1.1 in \cite{achlioptas2003database}.

\begin{lemma}[JL Lemma for inner-product preserving (Bernoulli)]
\label{lem:jl}
Suppose $S$ is an arbitrary set of $l$ points in $\bbR^d$ and suppose $s$ is an upper bound for the maximum $\ell^2$-norm for vectors in $S$. Let $B$ be a $k\times d$ random matrix, where $B_{ij}$ are independent random variables taking value from $1$ or $-1$ with probability $1/2$ respectively.
With the probability at least $ 1 - (l+1)^2\exp\left(-k\left(\frac{\beta^2}{4} - \frac{\beta^3}{6}\right)\right)$, $\forall \mathbf{u},\mathbf{v}\in S$, we have
$$
\frac{\mathbf{u}^\top\mathbf{v}}{s^2} - 4\beta \leq \frac{\left( B\mathbf{u}/\sqrt{k}\right)^\top\left( B\mathbf{v} / \sqrt{k}\right)}{s^2} \leq \frac{\mathbf{u}^\top\mathbf{v}}{s^2} + 4\beta.
$$
\end{lemma}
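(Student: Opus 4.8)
The plan is to reduce the inner-product statement to the familiar norm-preservation (squared-length) form of the Johnson--Lindenstrauss guarantee and then recover inner products through the polarization identity. Writing $f(\bx) := B\bx/\sqrt{k}$, the map $f$ is linear, so $f(\bu)-f(\bv)=f(\bu-\bv)$ and
\begin{equation*}
f(\bu)^\top f(\bv) = \tfrac{1}{2}\left( \lVert f(\bu)\rVert^2 + \lVert f(\bv)\rVert^2 - \lVert f(\bu-\bv)\rVert^2\right),
\end{equation*}
mirroring the exact identity $\bu^\top\bv = \tfrac{1}{2}\left(\lVert\bu\rVert^2 + \lVert\bv\rVert^2 - \lVert\bu-\bv\rVert^2\right)$. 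Hence it suffices to control the multiplicative distortion of $f$ on the squared norms of $\bu$, $\bv$, and the differences $\bu-\bv$, uniformly over $S$.

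The engine is the single-vector concentration underlying Theorem~1.1 of \citet{achlioptas2003database}: for a fixed $\bw$ and the $\pm1$ matrix $B$, one has $\bbE\lVert f(\bw)\rVert^2 = \lVert\bw\rVert^2$, and a moment-generating-function (Chernoff) argument yields the two-sided tail
\begin{equation*}
\bbP\left[\left|\,\lVert f(\bw)\rVert^2 - \lVert\bw\rVert^2\,\right| \ge \beta\lVert\bw\rVert^2\right] \le 2\exp\left(-k\left(\tfrac{\beta^2}{4}-\tfrac{\beta^3}{6}\right)\right),
\end{equation*}
whose exponent is exactly the one appearing in the claim. I would apply this bound not to $S$ alone but to the augmented set $S\cup\{\mathbf{0}\}$ of $l+1$ points: preserving every pairwise squared distance in $S\cup\{\mathbf{0}\}$ simultaneously preserves each $\lVert\bu\rVert^2$ (the distance to the origin) and each $\lVert\bu-\bv\rVert^2$. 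A union bound over the $\binom{l+1}{2}$ distinct distances, with the factor $2$ absorbed, gives a failure probability at most $(l+1)l\exp(-k(\beta^2/4-\beta^3/6)) \le (l+1)^2\exp(-k(\beta^2/4-\beta^3/6))$, matching the stated success probability.

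On this good event I would substitute the multiplicative bounds into the polarization identity. Each projected squared norm deviates from its target by at most $\beta$ times that target; bounding the targets by their maxima, namely $\lVert\bu\rVert^2,\lVert\bv\rVert^2 \le s^2$ and $\lVert\bu-\bv\rVert^2 \le (2s)^2 = 4s^2$, converts the three multiplicative errors into additive ones that, after collecting the $\tfrac{1}{2}$ factors, total at most $4\beta s^2$ in absolute value. Dividing through by $s^2$ then delivers both the upper and lower inequalities of the statement at once.

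I expect the main obstacle to be careful bookkeeping rather than a deep new idea. The two delicate points are extracting the precise Chernoff tail from \citet{achlioptas2003database} so that the exponent lands exactly at $\beta^2/4-\beta^3/6$ (the two-sided factor of $2$ must be folded into the $(l+1)^2$ count rather than surfacing separately), and organizing the union bound over $S\cup\{\mathbf{0}\}$ so that no difference vector is double-counted. The constant $4$ in $4\beta$ is deliberately loose --- a tighter collection of the error terms actually yields $3\beta$ or less --- so the only genuine care is to apply the crude bound $\lVert\bu-\bv\rVert^2\le(2s)^2$ consistently, which is what inflates the clean $\beta$-scale error into the stated $4\beta$.
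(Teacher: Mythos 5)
Your proof is correct. There is nothing in the paper to compare it against: the lemma is restated in the appendix \emph{without proof}, justified only by the remark that it is ``extended from Theorem 1.1 in Achlioptas (2003)'', and your argument --- the two-sided tail $2\exp\left(-k\left(\frac{\beta^2}{4}-\frac{\beta^3}{6}\right)\right)$ for a single vector under the $\pm 1$ projection, a union bound over the $\binom{l+1}{2}$ pairs of $S\cup\{\mathbf{0}\}$ (which absorbs the factor $2$ into $l(l+1)\leq (l+1)^2$), and then polarization $\mathbf{u}^\top\mathbf{v}=\tfrac{1}{2}\left(\lVert\mathbf{u}\rVert^2+\lVert\mathbf{v}\rVert^2-\lVert\mathbf{u}-\mathbf{v}\rVert^2\right)$ --- is exactly the standard way to carry out that extension, so you have supplied the step the authors left implicit. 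As you note, your bookkeeping actually gives the additive error $\tfrac{1}{2}(1+1+4)\beta s^2=3\beta s^2$, slightly stronger than the stated $4\beta s^2$, so the lemma's constant has slack and your argument implies it.
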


\section{Notation and Problem Setup}
\label{sec:problem_setting}
\paragraph{Notations.} Denote $D^j$, $j=1, \cdots, m$, as data matrices for $m$ parties, where $D^j\in \mathbb{R}^{n\times d_j}$ and $m\geq 2$.
They are aligned by the same set of subjects but have different attributes and they have the same number of samples.
Define $D = \left[D^1,  \cdots, D^m\right]\in \mathbb{R}^{n\times (d + 1)}$ as the collection of all  datasets, where $d  =d_1 + \cdots + d_m-1$. We define $d$ by subtracting 1 from the total number of attributes because one column is label which we need to treat separately.
Define $d_{\rm max}=\max_{j\in[m]}d_j$, and $D_i$ as the $i$-th row of $D$, we make the following assumption on data distribution:
\begin{assumption}
\label{ass:data_dist}
$D_i$, $i=1,\cdots, n$, are i.i.d sampled from an underlying distribution $\calP$ over $\bbR^{d+1}$.
\end{assumption}

\paragraph{Dataset release algorithm.} A private multi-party data release algorithm needs to protect both inter-party and intra-party communications.
The general workflow of our proposed algorithms is designed as the following:
\begin{enumerate}[leftmargin=*,nosep]
    \item Pre-generate random variable $B$. The pre-generated one or more random variables will be shared among parties. 
     \item Privatize the dataset locally with the algorithm $\calA^{\rm priv}$. Each party applies the same privatizing algorithm $\calA^{\rm priv}$ that takes the local dataset $D^j\in\bbR^{n\times d_j}$ and the random matrix $B$ as the inputs and then outputs $k$ (predefined) ``encrypted'' samples  $\left(D^{\rm pub}\right)^j := \calA^{\rm priv}(D^j; B) \in \bbR^{k\times d_j}$.
    \item Release the dataset. All parties jointly release $D^{\rm pub}=[\left(D^{\rm pub}\right)^1, \cdots, \left(D^{\rm pub}\right)^m]\in\bbR^{k\times (d+1)}$ to the public.
\end{enumerate}

Note that we need to specially design random variable $B$ and the privatizing algorithm $\calA^{\rm priv}$, which we will introduce in the next section. In addition, the random variable $B$ allows the dependencies between the randomized output from all parties, which can be utilized to guarantee the final utility. 

\paragraph{Privacy constraint.}
Since the public will observe the released dataset $D^{\rm pub}$, for each $j\in[m]$, $\left(D^{\rm pub}\right)^j$ should not leak the information of the private dataset $D^j$. Formally we require $\forall j\in[m]$, $\calA^{\rm priv}(D^j; B)$ is differentially private, where two neighbouring datasets $D^j$ and $\left(D^j\right)'$ differ at one row (sample).

However the multi-party setting requires more than the above guarantee because each party $j'\neq j$ not only observes $D^j$ but also the shared random variable $B$. Thus we need to further require that given $B$, each party $j$ cannot infer information about other private datasets $D^j$. In terms of differential privacy,  it is required that condition on $B$ for any possible sample value $I$, $\calA^{\rm priv}(D^j; B)$ is $(\varepsilon, \delta)$-differentially private, \emph{i.e.} for any two neighbouring datasets $D^j$ and $\left(D^j\right)'$ and $B$, we have
\resizebox{\linewidth}{!}{
 \begin{minipage}{\linewidth}
 \begin{align*}
\bbP(\calA^{\rm priv}(D^j; B)|B) \leq e^{\varepsilon}\cdot\bbP\left(\left(\calA^{\rm priv}\left(\left(D^j\right)'; B\right)\right|B\right) + \delta.
\end{align*}
 \end{minipage}
}

\paragraph{Utility target.}
We aim to guarantee the performance of \emph{arbitrary} linear regression task (arbitrarily selected label and features) on the joint released dataset $\left[D^1, \cdots, D^m\right]$.
Out of the notation simplicity, we assume the label in the linear regression task is the last attribute, and the features are the rest of the attributes.
Under this assumption, the joint private dataset $D$ can be written as $[X, Y]$, where $X\in\bbR^{n\times d}$ is the private feature matrix and $Y\in\bbR^n$ is the private label vector. 
Similarly the public dataset $D^{\rm pub}$ can be written as $[X^{\rm pub}, Y^{\rm pub}]$, where $X^{\rm pub}\in\bbR^{k\times d}$ and $Y^{\rm pub}\in\bbR^k$.

We define the loss function by the expected squared loss:
\begin{equation}
\label{eq:obj}
    L(\bw; \calP) = \bbE_{(\bx, y) \sim \calP}\left[ (\bw^\top\bx - y)^2\right],
\end{equation}
where the data point is sampled from the distribution $\calP$ in \autoref{ass:data_dist}.
We make two more assumptions for the distribution $\calP$: 
the standard normalization and the \textit{no perfect multicollinearity} assumption. The latter is common in the literature of linear regression~\citep{farrar1967multicollinearity,chatterjee2006regression}.
\begin{assumption}
\label{ass:bound}
	The absolute values of all attributes $|D_{ij}|$ are bounded by $1$.
\end{assumption}
\begin{assumption}
\label{ass:non-singular}
	$\bbE_{(\bx, y)\sim \calP}\left[ \bx\bx^\top \right]$ is positive definite.
\end{assumption}
Under \autoref{ass:non-singular}, derived by setting $\nabla_{\bw}L(\bw; \calP)=0$, the optimal solution $\bw^*$ to the loss in \autoref{eq:obj} has the following explicit form:
$$
\bw^*=\left(\bbE_{(\bx, y)\sim \calP}\left[\bx\bx^\top\right]\right)^{-1}\bbE_{(\bx, y)\sim \calP}\left[\bx\cdot y\right].
$$

The utility target (for the trained linear regression model) is determined by our release algorithm $(B, \calA^{\rm priv})$. For a given public dataset $D^{\rm pub}$ released by our algorithms, we define our utility target as the existence of a training algorithm $\calA^{\rm lr}$ that achieves the asymptotic property for the trained model weights $\hat{\bw}_n:=\calA^{\rm lr}\left(D^{\rm pub}\right)$ as the dataset size $n\to\infty$. 
The asymptotic property is commonly studied in differential privacy \citep{chaudhuri2011sample, bassily2014private, feldman2020private} and we restate it as follows: $\hat{\bw}_n$ converges to $\bw^*$ in probability as the size of dataset $n$ increases, \emph{i.e.} $\forall \beta>0, ~\lim_{n\to \infty}\bbP\left[ \lVert\hat{\bw}_n - \bw^*\rVert > \beta\right] = 0$. The randomness from the above property comes from data sampling $\calP$, dataset release algorithm $(B, \calA^{\rm priv})$, and the training algorithm $\calA^{\rm lr}$.

\section{Methodology}
\label{sec:method}
We now describe our data release algorithms which both satisfy the differential privacy and yield asymptotically optimal solutions to the linear regression task.
We start with the first algorithm \textit{\methodonelong{} (\methodoneshort{})}, which directly applies Gaussian mechanism when releasing the data and then de-biases the Hessian matrix when training the model.
However the de-bias operator introduces the possible inverse of a matrix with small eigenvalues, which severely hurts the performance of the learned model.
We therefore propose a novel dataset release algorithm rather than the directly application to Gaussian mechanism -- \textit{\methodtwolong{} (\methodtwoshort{})}. The model learned from the corresponding released public dataset is also guaranteed to be asymptotically optimal, and, more importantly, avoids the problem of small eigenvalues.  

\subsection{De-biased Gaussian Mechanism (\methodoneshort)}
\label{sec:dgm}
The \methodonelong (\methodoneshort) includes the dataset release algorithm and the corresponding training algorithm. \autoref{alg:dgm} shows the overview and we will introduce them next.

\begin{algorithm}[t]
\textbf{Dataset Release}
\begin{algorithmic}[1]
\State \textbf{Input:} $D=\left[D^1, \cdots, D^m\right], \varepsilon, \delta$.
\For{$j=1, \cdots, m$}
	\State The party $j$ computes $\left(D^{\methodoneupper}\right)^j:= D^j + R^j $, where $R^j\in \bbR^{n\times d_j}$ is a random Gaussian matrix and elements in $R^j$ are i.i.d sampled from $\calN\left(0, 4d_{\rm max}\cdot \sigma_{\varepsilon, \delta}^2\right)$.
\EndFor
\State \textbf{Return:} $D^{\methodoneupper}=:\left[\left(D^{\methodoneupper}\right)^1, \cdots, \left(D^{\methodoneupper}\right)^m\right]$.
\end{algorithmic}
\textbf{Training Algorithm}
\begin{algorithmic}[1]
\State \textbf{Input:} $D^{\methodoneupper}, \varepsilon, \delta$
\State $[X^{\methodoneupper}, Y^{\methodoneupper}] = D^{\methodoneupper}$
\State Compute the de-biased Hessian matrix $\hat{H}_n^{\methodoneupper} := \frac{1}{n}\left(X^{\methodoneupper}\right)^\top X^{\methodoneupper} - { 4d_{\rm max}\sigma_{\varepsilon, \delta}^2\cdot I}$
\State $\hat{\bw}_n^{\methodoneupper{}} := \left(\hat{H}_n^{\methodoneupper}\right)^{-1}\left(\frac{1}{n}\left(X^{\methodoneupper}\right)^\top Y^{\methodoneupper}\right)$.
\State \textbf{Return:} $\hat{\bw}_n^{\methodoneupper{}}$.
\end{algorithmic}

\caption{\methodoneshort{}}
\label{alg:dgm}
\end{algorithm}

\paragraph{Dataset release algorithm.} 
Each party directly applies Gaussian mechanism to their own dataset  $D^j$ ($j=1, \cdots m$) to satisfy the differential privacy.
Consider two neighboring data matrices $D^j$ and $\left(D^j\right)'$ differing at exactly one row with the row index $i$.
Implied by \autoref{ass:bound}, we can compute the sensitivity of the data matrix $D^j$:
$$
\left\lVert D^j - \left(D^j\right)' \right\rVert = \left\lVert D^j_i - \left(D^j_i\right)' \right\rVert  \leq 2\sqrt{d_j} \leq 2\sqrt{d_{\rm max}}.
$$ 
Then each party independently adds a Gaussian noise $R^j$ to $D^j$.  
Entries in $R^j$ are i.i.d sampled from Gaussian distribution $\mathcal{N} (0, 4d_{\rm max}\sigma_{\varepsilon, \delta}^2)$. 

The dataset release algorithm meets the privacy constraints in \autoref{sec:problem_setting}. No random matrix $B$ is shared among different parties. \autoref{lem:gm} guarantees that $\left(D^{\methodoneupper{}}\right)^j$ is $(\varepsilon, \delta)$-differentially private w.r.t. $D^j$ for any $0<\varepsilon\leq 1, \delta>0$.

\paragraph{Training algorithm.} Given the dataset released through the above algorithm, there exists an asymptotic linear regression solution. Denote the feature matrix and the label vector of the private and public joint dataset as $[X, Y]=D$ and $[X^{\methodoneupper}, Y^{\methodoneupper}] = D^{\methodoneupper}$.
Further define $R:=\left[R^1, \cdots, R^m\right]\in \bbR^{n\times (d+1)}$ and split $R$ into $R_X$ and $R_Y$ representing the additive noise to $X$ and $Y$ respectively.

Consider the ordinary least square solution for the public data $X^{\methodoneupper}$ and $Y^{\methodoneupper}$, whose explicit form is:
\begin{equation}
\label{eq:ols_gm_pub}
    \left(\left(X^{\methodoneupper}\right)^\top X^{\methodoneupper}\right)\left(X^{\methodoneupper}\right)^\top Y^{\methodoneupper}.
\end{equation}
Compared with our target solution $
\bw^*=\left(\bbE_{(\bx, y)\sim \calP}\left[\bx\bx^\top\right]\right)^{-1}\bbE_{(\bx, y)\sim \calP}\left[\bx\cdot y\right]
$, we can prove that $\plim_{n\to\infty} \frac{1}{n}\left(X^{\methodoneupper}\right)^\top Y^{\methodoneupper} = \mathbb{E}_{(\bx, y)\sim \calP}\left[\bx\cdot y\right]$ by the concentration of bounded random variables and multivariate normal distribution. Nevertheless, there is a gap between $\plim_{n\to\infty} \frac{1}{n}\left(X^{\methodoneupper}\right)^\top X^{\methodoneupper}$ and $\mathbb{E}_{(\bx, y)\sim \calP}\left[\bx\bx^\top\right]$:
\begin{align*}
&\plim_{n\to\infty} \frac{1}{n}\left(X^{\methodoneupper}\right)^\top X^{\methodoneupper}\\
 &= \plim_{n\to\infty}\frac{1}{n}\left(X^\top X+ X^\top R_X + R_X^\top X + R_X^\top R_X \right)\\
&= \mathbb{E}_{(\bx, y)\sim \calP}\left[\bx\bx^\top\right] + { 4d_{\rm max} \sigma_{\varepsilon, \delta}^2\cdot I},
\end{align*}
where the last equation again holds by the concentration of bounded random variables and multivariate normal distribution.
To reduce the bias ${ 4d_{\rm max} \sigma_{\varepsilon, \delta}^2\cdot I}$, we can revise the solution computation in \autoref{eq:ols_gm_pub} to $\hat{\bw}^{\methodoneupper}_n$ defined as\newline
\resizebox{\linewidth}{!}{
 \begin{minipage}{\linewidth}
\begin{align*}
\label{eq:debias_ols_gm_pub}
    	 \left(\frac{1}{n}\left(X^{\methodoneupper}\right)^\top X^{\methodoneupper}- { 4d_{\rm max}\sigma_{\varepsilon, \delta}^2\cdot I}\right)^{-1} \left(\frac{1}{n}\left(X^{\methodoneupper}\right)^\top Y^{\methodoneupper}\right).
\end{align*}
\end{minipage}
}

The first term is estimated for the inverse of the Hessian matrix $\mathbb{E}_{(\bx, y)\sim \calP}\left[\bx\bx^\top\right]$, which we denote as $(\hat{H}_n^{\methodoneupper})^{-1}$. The asymptotic optimality for the solution $ \hat{\bw}^{\methodoneupper}_n$ is implied by the theorem below and the proof is in the Appendix.
\begin{theorem}
\label{thm:dgm_utility}
	When $\beta\leq c$ for some variable $c$ that is dependent of $\sigma_{\varepsilon, \delta}$, $d$, and $\calP$, but is independent of $n$,
	$$
	\bbP\left[ \lVert \hat{\bw}^{\methodoneupper}_n - \bw^* \rVert > \beta \right] < \exp\left(-\tilde{O}\left( \beta^2 \frac{n}{\sigma_{\varepsilon, \delta}^4d^2d_{\rm max}^2} \right)\right),
	$$
\end{theorem}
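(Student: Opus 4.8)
The plan is to bound $\lVert\hat{\bw}^{\methodoneupper}_n - \bw^*\rVert$ by a perturbation argument that separates the estimation error of the de-biased Hessian from that of the cross term, and then to equip each piece with a concentration inequality. Writing $H := \bbE_{(\bx, y)\sim\calP}[\bx\bx^\top]$ and $\bg := \bbE_{(\bx, y)\sim\calP}[\bx\, y]$ so that $\bw^* = H^{-1}\bg$, and abbreviating $\hat{\bg}_n := \frac{1}{n}(X^{\methodoneupper})^\top Y^{\methodoneupper}$, the resolvent identity $(\hat{H}_n^{\methodoneupper})^{-1} - H^{-1} = (\hat{H}_n^{\methodoneupper})^{-1}(H - \hat{H}_n^{\methodoneupper})H^{-1}$ together with $\bg = H\bw^*$ yields
\begin{equation*}
\hat{\bw}^{\methodoneupper}_n - \bw^* = (\hat{H}_n^{\methodoneupper})^{-1}\left[(\hat{\bg}_n - \bg) + (H - \hat{H}_n^{\methodoneupper})\bw^*\right],
\end{equation*}
and hence $\lVert\hat{\bw}^{\methodoneupper}_n - \bw^*\rVert \leq \lVert(\hat{H}_n^{\methodoneupper})^{-1}\rVert\,(\lVert\hat{\bg}_n - \bg\rVert + \lVert H - \hat{H}_n^{\methodoneupper}\rVert\,\lVert\bw^*\rVert)$. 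Since $\lVert\bg\rVert \leq \sqrt{d}$ and $H$ is fixed, $\lVert\bw^*\rVert \leq \sqrt{d}/\lambda_{\min}(H)$ is a constant independent of $n$, so the problem reduces to showing that $\lVert\hat{\bg}_n - \bg\rVert$ and $\lVert H - \hat{H}_n^{\methodoneupper}\rVert$ are both small with high probability.

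First I would control the inverse factor. By \autoref{ass:non-singular} the smallest eigenvalue $\lambda_{\min}(H)$ is strictly positive, so Weyl's inequality shows that on the event $\lVert H - \hat{H}_n^{\methodoneupper}\rVert \leq \lambda_{\min}(H)/2$ the matrix $\hat{H}_n^{\methodoneupper}$ is invertible with $\lVert(\hat{H}_n^{\methodoneupper})^{-1}\rVert \leq 2/\lambda_{\min}(H)$. This is precisely where the hypothesis $\beta \leq c$ enters: the deviation threshold to which I drive $\lVert H - \hat{H}_n^{\methodoneupper}\rVert$ must stay below $\lambda_{\min}(H)/2$ while also delivering accuracy $\beta$, which constrains $\beta$ by an $n$-independent quantity $c$ depending on $\lambda_{\min}(H)$, $d$ and $\sigma_{\varepsilon, \delta}$.

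The heart of the argument is the two concentration bounds. Substituting $X^{\methodoneupper} = X + R_X$ and $Y^{\methodoneupper} = Y + R_Y$ and expanding, each entry of $\hat{H}_n^{\methodoneupper}$ and $\hat{\bg}_n$ is a sum of three kinds of averages: (i) products of bounded data entries such as $\frac{1}{n}\sum_i X_{ia}X_{ib}$, whose mean is the matching entry of $H$ (respectively $\bg$) and which concentrate by Hoeffding since every factor lies in $[-1,1]$; (ii) data--noise cross terms such as $\frac{1}{n}\sum_i X_{ia}(R_X)_{ib}$, which are zero-mean and, conditionally on the data, sub-Gaussian with variance proxy of order $d_{\rm max}\sigma_{\varepsilon, \delta}^2/n$; and (iii) the pure-noise terms $\frac{1}{n}\sum_i (R_X)_{ia}(R_X)_{ib}$, whose centred versions dominate. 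The products of independent Gaussians in (iii) are sub-exponential with parameter of order $d_{\rm max}\sigma_{\varepsilon, \delta}^2$, and the correction $-4d_{\rm max}\sigma_{\varepsilon, \delta}^2 I$ cancels their mean exactly, so Bernstein's inequality gives, per entry and for small deviation $t$, a bound $\exp(-\Omega(nt^2/(d_{\rm max}^2\sigma_{\varepsilon, \delta}^4)))$. A union bound over the $O(d^2)$ entries, together with the passage from entrywise control to the operator norm via $\lVert M\rVert \leq d\max_{a,b}|M_{ab}|$ (costing a factor $d$ in the per-entry threshold), then yields
\begin{equation*}
\bbP\left[\lVert H - \hat{H}_n^{\methodoneupper}\rVert > t\right] + \bbP\left[\lVert\hat{\bg}_n - \bg\rVert > t\right] \leq \exp\left(-\tilde{O}\left(\frac{nt^2}{d^2 d_{\rm max}^2\sigma_{\varepsilon, \delta}^4}\right)\right),
\end{equation*}
the $d^2$ union-bound prefactor and the faster-decaying Hoeffding and sub-Gaussian contributions being absorbed into $\tilde{O}$. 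Choosing $t \asymp \lambda_{\min}(H)\beta$ (with the $\calP$-dependent constants $\lambda_{\min}(H)$ and $\lVert\bw^*\rVert$ absorbed into $\tilde{O}$) makes the right-hand side of the first-paragraph decomposition at most $\beta$ on the intersection of the two good events, and substituting this $t$ produces the stated rate.

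The step I expect to be the main obstacle is term (iii): the pure-noise products are only sub-exponential, not sub-Gaussian, so one must verify that for the small deviations $t \asymp \lambda_{\min}(H)\beta$ relevant here the quadratic branch of Bernstein's inequality dominates the linear branch --- this is exactly why the exponent is quadratic in $\beta$ and why $\beta \leq c$ cannot be removed. The remaining labour is bookkeeping: keeping the dependence on $d$, $d_{\rm max}$ and $\sigma_{\varepsilon, \delta}$ consistent across the three families of terms and through the union bound, so that the final denominator is $\sigma_{\varepsilon, \delta}^4 d^2 d_{\rm max}^2$ and no larger.
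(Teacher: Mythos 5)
Your proposal is correct and follows essentially the same route as the paper: a perturbation bound on the matrix inverse (the paper packages this as its reduction lemma, using the same self-bounding resolvent estimate and the same small-$\beta$ condition you get from Weyl), followed by term-by-term concentration — Hoeffding for data–data products, Gaussian tails for data–noise cross terms, and chi-square/sub-exponential (Bernstein) concentration for the de-biased pure-noise products — combined by a union bound over entries, yielding the dominant $n\beta^2/(\sigma_{\varepsilon,\delta}^4 d^2 d_{\rm max}^2)$ exponent. The only differences are cosmetic: the paper splits the comparison into public-vs-empirical and empirical-vs-population steps and derives the Bernstein-type chi-square bounds by hand (its Lemmas on $x-\log(1+x)$ and products of Gaussians as differences of chi-squares), whereas you compare directly to the population quantities and invoke standard sub-exponential machinery.
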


\paragraph{Problem of small eigenvalues.}
The expectation of $\hat{H}^{\methodoneupper}_n$ is a positive definite matrix given \autoref{ass:non-singular}, but the sample of $\hat{H}^{\methodoneupper}_n$ itself is not guaranteed. 
With a certain probability, it has small eigenvalues that might lead to explosion when computing its inverse.
In our experiments (\autoref{sec:exp}), we find that $\hat{H}^{\methodoneupper}_n$ suffers from the small eigenvalues even if $n$ is as large as $10^6$. As a result, the model utility is much more inferior than what is guaranteed theoretically. This motivates us to design the second algorithm.

\subsection{Random Mixing prior to Gaussian Mechanism (\methodtwoshort)} 
\label{sec:rmgm}
\begin{algorithm}[t]
\textbf{Dataset Release}
\begin{algorithmic}[1]
\State \textbf{Input:} $D=\left[D^1, \cdots, D^m\right], \varepsilon, \delta, k$.
\State The first party pre-generates a $k\times n$ random matrix $B$ where all entries in $B$ are \textit{i.i.d.} sampled from the distribution with probability $1/2$ for $1$ and $1/2$ for $-1$. Then first party sends the random matrix sample $B$ to all parties.
\For{$j=1, \cdots, m$}
\State The party $j$ computes $\left(D^{\methodtwoupper}\right)^j:= BD^j/\sqrt{k} + R^j $, where $R^j$ is a $k\times d_j$ random matrix and all elements in $R^j$ are \textit{i.i.d.} sampled from the multivariate normal distribution $ \calN\left(0, 4d_{\rm max}\sigma_{\varepsilon, \delta}^2\right)$.
\EndFor
\State \textbf{Return:} $D^{\methodtwoupper}:=\left[\left(D^{\methodtwoupper}\right)^1, \cdots, \left(D^{\methodtwoupper}\right)^m\right]$.
\end{algorithmic}
\textbf{Training Algorithm}
\begin{algorithmic}[1]
\State \textbf{Input:} $D^{\methodtwoupper{}}, \varepsilon, \delta$
\State $[X^{\methodtwoupper}, Y^{\methodtwoupper}]=D^{\methodtwoupper{}}$
\State Compute the ordinary least square solution\newline $\hat{\bw}_n^{\methodtwoupper} := \left(\left(X^{\methodtwoupper}\right)^\top X^{\methodtwoupper}\right)^{-1}\left(X^{\methodtwoupper}\right)^\top Y^{\methodtwoupper}.$
\State \textbf{Return:} $\hat{\bw}_n^{\methodtwoupper}$.
\end{algorithmic}

\caption{\methodtwoshort{}}
\label{alg:randproj}
\end{algorithm}

In previous method's dataset release stage, when we directly add the Gaussian additive noise $R$ to the data, in order to guarantee DP, the norm of the noise needed has to be the same order (in $n$) as the norm of the data matrix $D$. Both $D$ and $R$ have norm in $\Theta(\sqrt{n})$. Thus later in the training stage, the additive noise $R$ when compared to the data matrix $X$ would not diminish as $n\to\infty$ and we have to subtract $4d_{\rm max}\sigma_{\varepsilon, \delta}^2\cdot I$ from $\left(X^{\methodoneupper{}}\right)^\top X^{\methodoneupper{}}$ to remove this additive noise in order to obtain the optimal model weights. This subtraction is the problematic part that brings training instability (small eigenvalues in the Hessian matrix).

Instead, we can avoid such subtraction in the training stage by imposing a smaller noise in the data release stage. If we can design the data release stage properly, 
so that the addictive noise has relatively smaller order in $n$ than $D$, in the later training stage, the learner would no longer need the problematic de-biasing step. 

\autoref{alg:randproj} shows the full details of Random Mixing prior to Gaussian Mechanism for Ordinary Least Squares (\methodtwoshort{}). We now explain the design of data release and training algorithm based on the above insights. 

\paragraph{Dataset release algorithm.} 
Suppose $\bb$ is an $n$-dimensional vector in $\{-1, 1\}^{n}$. 
For any two neighbouring daasets $D^j$ and $\left(D^j\right)'$ that are different at row index $i$, the sensitivity of $\bb^\top D^j$ is \newline
\resizebox{\linewidth}{!}{
 \begin{minipage}{\linewidth}
\begin{align*}
\left\lVert \bb^\top D^j - \bb^\top \left(D^j\right)'\right\rVert = \left\lVert D_i^j - \left(D^j_i\right)'\right\rVert \leq 2\sqrt{d_j}\leq 2\sqrt{d_{\rm max}}.
\end{align*}
\end{minipage}
}
Moreover, when $B\in\{-1, 1\}^{k\times n}$, $BD^j/\sqrt{k}$ has sensitivity $2\sqrt{d_{\rm max}}$ as well. 

We now introduce the data release algorithm. Suppose all parties are sharing a random matrix $B\in\{-1, 1\}^{k\times n}$, where all elements in $B$ are \textit{i.i.d.} sampled from the distribution with probability $1/2$ for $1$ and $1/2$ for $-1$. Then we define the local computation for each party $j$:
$$\left(D^{\methodtwoupper}\right)^j:= BD^j/\sqrt{k} + R^j,$$
where $R^j$ is a $k\times d_j$ random matrix and all elements in $R^j$ are \textit{i.i.d.} sampled from the multivariate normal distribution $ \calN\left(0, 4d_{\rm max}\sigma_{\varepsilon, \delta}^2\right)$. Gaussian mechanism guarantees for any fixed $B\in\left\{1, -1\right\}^{k\times n}$, $\left(D^{\methodtwoupper}\right)^j$ is $(\varepsilon, \delta)$-differentially private \textit{w.r.t.} the dataset $D^j$ for $0<\varepsilon\leq 1, \delta>0$. 

Importantly, now the addictive noise $R^j$ is relatively small than $BD^j/\sqrt{k}$. 
The order of $\lVert R^j \rVert$ is $\Theta(k)$ while the order of
$\left\lVert BD^j/\sqrt{k} \right\rVert \approx \lVert D^j \rVert$ is $\Theta(n)$ (by JL Lemma).
If we set $k = o(n)$, the additive noise compared to the original data matrix $D$ will diminish as $n\to \infty$.
This implies that the standard ordinary least square solution to the public dataset $[X^{\methodtwoupper}, Y^{\methodtwoupper}]$ would converge to the optimal solution $\bw^*$ without special subtraction.

\paragraph{Training algorithm.}
Given the feature matrix $X^{\methodtwoupper}$ and the label vector $Y^{\methodtwoupper}$ from the released dataset, we show that the vanilla ordinary least square solution
$$\hat{\bw}_n^{\methodtwoupper} := \left(\left(X^{\methodtwoupper}\right)^\top X^{\methodtwoupper}\right)^{-1}\left(X^{\methodtwoupper}\right)^\top Y^{\methodtwoupper}$$
is asymptotically optimal, i.e. $\plim_{n\to\infty}\hat{\bw}_n^{\methodtwoupper{}} = \bw^*$.

To prove the above asymptotic optimality, we show $\plim_{n\to \infty}\left(X^{\methodtwoupper}\right)^\top X^{\methodtwoupper} = \bbE_{(\bx, y)\sim\calP}\left[\bx\bx^\top\right]$ and $\plim_{n\to \infty}\left(X^{\methodtwoupper}\right)^\top Y^{\methodtwoupper} = \bbE_{(\bx, y)\sim\calP}\left[\bx\cdot y\right]$ respectively, and together they prove the optimality.

Define $R=\left[R^1, \cdots, R^m\right]\in \bbR^{k\times  (d+1)}$ and split $R$ into $R_X$ and $R_Y$ representing the additive noises to $BX/\sqrt{k}$ and $BY/\sqrt{k}$ respectively. 
Because $\plim_{n\to\infty}\frac{1}{n}X^\top X=\bbE_{(\bx, y)\sim \calP}\left[\bx\bx^\top\right]$, it is sufficient to show $\plim_{n\to\infty}\frac{1}{n}\left(X^{\methodtwoupper}\right)^\top X^{\methodtwoupper} - \frac{1}{n}X^\top X=\mathbf{0}$. Now we decompose $\frac{1}{n}\left(X^{\methodtwoupper}\right)^\top X^{\methodtwoupper} - \frac{1}{n}X^\top X$ as below:
\begin{align*}
\tiny
    \underbrace{\frac{1}{n}\left(X^\top \frac{B^\top B}{k}X - X^\top X\right)}_\text{\autoref{lem:jl}} + \underbrace{\frac{1}{n}\left(X^\top \frac{B^\top}{\sqrt{k}} R_X + R_X^\top \frac{B}{\sqrt{k}}X\right)}_\text{cvg. of gauss. dist.}
	 + \underbrace{\frac{1}{n}R_X^\top R_X}_\text{$k =o(n)$}.
\end{align*}
We informally show how each term converges to $\mathbf{0}$ as $n\to\infty$:
\begin{enumerate}[leftmargin=*,nosep]
	\item $\frac{1}{n}\left(X^\top \frac{B^\top B}{k}X - X^\top X\right)$. If $k\to\infty$ as $n\to\infty$, the convergence is directly implied by \autoref{lem:jl}.
	\item $\frac{1}{n}\left(X^\top \frac{B^\top}{\sqrt{k}} R_X + R_X^\top \frac{B}{\sqrt{k}}X\right)$. Properties of normal distribution guarantees the approximation $ \frac{B^\top}{\sqrt{k}} R_X\approx R_X'$, where $R_X'\in \bbR^{n\times d}$ is a Gaussian matrix with $\mathcal{N}  \left(0, 4d_{\rm max}\sigma_{\varepsilon, \delta}^2\right)$. Then $\left\lVert\frac{1}{n}X^\top \frac{B^\top}{\sqrt{k}} R_X\right\rVert \approx \lVert\frac{1}{n}X^\top R_X'\rVert = O\left(\frac{1}{\sqrt{n}}\right)$. 
	\item $\frac{1}{n}R_X^\top R_X$. If $k\to\infty$ as $n\to\infty$, $\left\lVert \frac{1}{k} R_X^\top R_X \right\rVert$ will converge to $4d_{\rm max} \sigma_{\varepsilon, \delta}^2\cdot I$. On the other hand, when $k = o(n)$, $\frac{k}{n}\cdot 4d_{\rm max}\sigma_{\varepsilon, \delta}^2\cdot I$ will converge to $\mathbf{0}$ as $n\to \infty$.
\end{enumerate}
Notice that the above convergence relies on the proper selection of $k$. 
There exists a trade-off: larger $k$ leads to better convergence rate of the first term, but worse rate for the diminishing of additive noise -- the third term. The following theorem shows the exact asymptotic rate:

\begin{theorem}
\label{thm:rmgm_utility}
	When $\beta\leq c$ for some variable $c$ that is dependent of $d$ and $\calP$, but independent of $\sigma_{\varepsilon, \delta}$,  $n$, we have \newline
	\resizebox{\linewidth}{!}{
  \begin{minipage}{\linewidth}
	\begin{align*}
	&\bbP\left[ \lVert \hat{\bw}^{\methodtwoupper{}}_n - \bw^* \rVert > \beta \right] < \\  
	& \exp\left(- O\left(\min\left\{ \frac{k\beta^2}{d^2}, \frac{n\beta}{kdd_{\rm max}\sigma_{\varepsilon, \delta}^2}, \frac{n^{1/2}\beta}{dd_{\rm max}^{1/2} \sigma_{\varepsilon, \delta}}\right\}\right) + \tilde{O}(1) \right).
	\end{align*}
	  \end{minipage}
}
	If we choose $k=O\left(\frac{n^{1/2}d^{1/2}}{d_{\rm max}^{1/2}\sigma_{\varepsilon, \delta}}\right)$, then
	\begin{align*}
	&\bbP\left[ \lVert \hat{\bw}^{\methodtwoupper{}}_n - \bw^* \rVert > \beta \right] < \\
	&\exp\left(- \frac{n^{1/2}\beta}{d^{3/2}d_{\rm max}^{1/2}\sigma_{\varepsilon, \delta}} \cdot O\left(\min\left\{ 1, \beta \right\}\right) + \tilde{O}(1) \right).
	\end{align*}
\end{theorem}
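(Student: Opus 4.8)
The plan is to reduce the weight error to spectral errors in the two empirical moments, and then bound each moment through the three-term decomposition already displayed before the theorem. Write $H:=\bbE_{(\bx,y)\sim\calP}[\bx\bx^\top]$ and $\bg:=\bbE_{(\bx,y)\sim\calP}[\bx\cdot y]$, so that $\bw^*=H^{-1}\bg$, and set $\hat H_n:=\frac1n(X^{\methodtwoupper})^\top X^{\methodtwoupper}$ and $\hat\bg_n:=\frac1n(X^{\methodtwoupper})^\top Y^{\methodtwoupper}$, so that $\hat\bw_n^{\methodtwoupper}=\hat H_n^{-1}\hat\bg_n$. First I would invoke a matrix-inverse perturbation bound: since $H$ is positive definite (\autoref{ass:non-singular}) with smallest eigenvalue $\lambda_{\min}>0$, whenever $\lVert\hat H_n-H\rVert\le\lambda_{\min}/2$ one has $\lVert\hat\bw_n^{\methodtwoupper}-\bw^*\rVert\le\frac{C_{\calP}}{\lambda_{\min}}\bigl(\lVert\hat H_n-H\rVert+\lVert\hat\bg_n-\bg\rVert\bigr)$ for a constant $C_{\calP}$ depending only on $\calP$ (through $\lambda_{\min}$ and $\lVert\bw^*\rVert$). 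The restriction $\beta\le c$ in the statement is exactly what keeps us inside this perturbative regime, and because \methodtwoshort{} needs no de-biasing subtraction, $c$ depends on $d$ and $\calP$ but not on $\sigma_{\varepsilon,\delta}$. This reduces the theorem to bounding $\bbP[\lVert\hat H_n-H\rVert>\beta']$ and $\bbP[\lVert\hat\bg_n-\bg\rVert>\beta']$ for $\beta'\asymp\beta\cdot\mathrm{const}(\calP)$.

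Next I would split $\hat H_n-H=\bigl(\hat H_n-\frac1nX^\top X\bigr)+\bigl(\frac1nX^\top X-H\bigr)$. The second summand is the ordinary i.i.d.\ concentration of a bounded moment (\autoref{ass:bound}) and contributes a benign $\exp(-\Omega(n\beta^2))$ tail that is dominated by the others. The first summand is precisely the three pieces displayed before the theorem, which I would bound at per-entry tolerance $\sim\beta/d$ so that the operator/Frobenius norm stays below $\beta$:
\begin{enumerate}[leftmargin=*,nosep]
\item The JL piece $\frac1n(X^\top\frac{B^\top B}{k}X-X^\top X)$ by \autoref{lem:jl} with deviation $\beta/d$, giving failure probability $\exp(-\Omega(k\beta^2/d^2))$; the $(l+1)^2=(d+1)^2$ prefactor becomes an additive $\tilde O(1)$ in the exponent.
\item The cross piece $\frac1n(X^\top\frac{B^\top}{\sqrt k}R_X+R_X^\top\frac{B}{\sqrt k}X)$ by Gaussian concentration conditional on $B$ (entries of $\frac{B^\top}{\sqrt k}R_X$ are Gaussian and $X$ is bounded), yielding the $\frac{n^{1/2}\beta}{dd_{\rm max}^{1/2}\sigma_{\varepsilon,\delta}}$ rate, which is subdominant at the optimal $k$.
\item The noise piece $\frac1nR_X^\top R_X$, whose rows are i.i.d.\ $\calN(\mathbf{0},4d_{\rm max}\sigma_{\varepsilon,\delta}^2 I)$, so it is a matrix chi-square/subexponential object with mean $\frac{4kd_{\rm max}\sigma_{\varepsilon,\delta}^2}{n}I$; choosing $k=o(n)$ drives the mean to $\mathbf{0}$, and a Bernstein-type tail gives the $\frac{n\beta}{kdd_{\rm max}\sigma_{\varepsilon,\delta}^2}$ rate.
\end{enumerate}

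The cross-moment $\hat\bg_n-\bg$ admits the identical three-way split with $Y$ replacing one copy of $X$, and each piece is bounded by the same tools and at the same or better order, so it never dominates. A union bound over all pieces then places $\bbP[\lVert\hat\bw_n^{\methodtwoupper}-\bw^*\rVert>\beta]$ below a sum of the above tails; since a sum of exponentials is controlled by the largest one, i.e.\ the smallest exponent, the three surviving rates enter as a minimum, producing the first displayed bound with the additive $\tilde O(1)$ coming from the JL prefactor. Finally I would optimize $k$: the JL rate grows like $k/d^2$ while the noise rate decays like $n/(kdd_{\rm max}\sigma_{\varepsilon,\delta}^2)$, and balancing their $n,d,d_{\rm max},\sigma_{\varepsilon,\delta}$ scaling (leaving the $\beta$-powers to be carried separately) gives $k=\Theta\bigl(n^{1/2}d^{1/2}/(d_{\rm max}^{1/2}\sigma_{\varepsilon,\delta})\bigr)$. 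At this $k$ both rates collapse to $\frac{n^{1/2}\beta}{d^{3/2}d_{\rm max}^{1/2}\sigma_{\varepsilon,\delta}}$ up to the factor $\min\{1,\beta\}$ (the JL piece scaling with $\beta^2$ and the noise piece with $\beta$), which recovers the second display.

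The main obstacle is the noise piece: obtaining the sharp $k$-dependence $\frac{n\beta}{kdd_{\rm max}\sigma_{\varepsilon,\delta}^2}$ in the exponent requires a chi-square/subexponential tail for $\frac1nR_X^\top R_X$ that correctly tracks both its vanishing mean (controlled by $k=o(n)$) and its fluctuations, and the passage from per-entry bounds to an operator-norm statement must not leak extra dimension factors that would disturb the stated rate. A secondary subtlety is that the approximation $\frac{B^\top}{\sqrt k}R_X\approx R_X'$ used in the cross piece is not exact, since the off-diagonal fluctuations of $\frac1kB^\top B$ couple the $B$-randomness to the Gaussian noise; that step must therefore be made rigorous by conditioning on a high-probability event for $B$ (again via \autoref{lem:jl}) rather than by treating $\frac1kB^\top B$ as the identity.
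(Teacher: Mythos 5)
Your proposal is correct and follows essentially the same route as the paper's proof: your matrix-inverse perturbation step is exactly the paper's reduction lemma (\autoref{lem:reduction}, which also folds in the Hoeffding step for $\frac{1}{n}X^\top X - H$), and your three bounds — JL at per-entry tolerance $\beta/d$ giving $\exp(-\Omega(k\beta^2/d^2))$, the cross term via Gaussian concentration conditional on a high-probability event for $B$, and the noise term giving $\exp(-\Omega(n\beta/(kdd_{\rm max}\sigma_{\varepsilon,\delta}^2)))$ — match the six terms the paper controls (for both $\hat H_n$ and $\hat\bg_n$), followed by the same union bound and the same balancing choice of $k$. Even the subtleties you flag (not treating $\frac{1}{k}B^\top B$ as the identity in the cross term, and tracking the mean of $\frac{1}{n}R_X^\top R_X$ through $k=o(n)$) are precisely how the paper argues, via its $\alpha$-optimized conditioning on $\mathbf{b}=BY/n$ and its entry-wise chi-square-type bounds.
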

In the theorem, $k$ is selected to balance $ \frac{k\beta^2}{d^2}$ and $\frac{n\beta}{kdd_{\rm max}\sigma_{\varepsilon, \delta}^2}$. 
To achieve the optimal rate for $f(\beta)$ with any fixed $\beta$, the optimal $k$ is chosen as $O\left(\frac{n^{1/2}d^{1/2}}{d_{\rm max}^{1/2}\sigma_{\varepsilon, \delta}}\right)$.

\textbf{Comparison with \methodoneshort.}
The near-zero eigenvalue issue is solved since $\left(X^{\methodtwoupper}\right)^\top X^{\methodtwoupper}\succeq \mathbf{0}$ holds naturally by its definition.
Moreover, although the convergence rate of $n$ is sacrificed, the orders in $d, d_{\rm max}$ and $\sigma_{\varepsilon, \delta}$ are much improved.
In \autoref{sec:exp} we show that the  \methodtwoshort{} outperforms \methodoneshort{} on both synthetic datasets even when $n$ is as large as $3\times 10^6$. 

\section{Experimental Evaluation}
\label{sec:exp}

\begin{figure*}[t!]
\centering
\includegraphics[width=\linewidth]{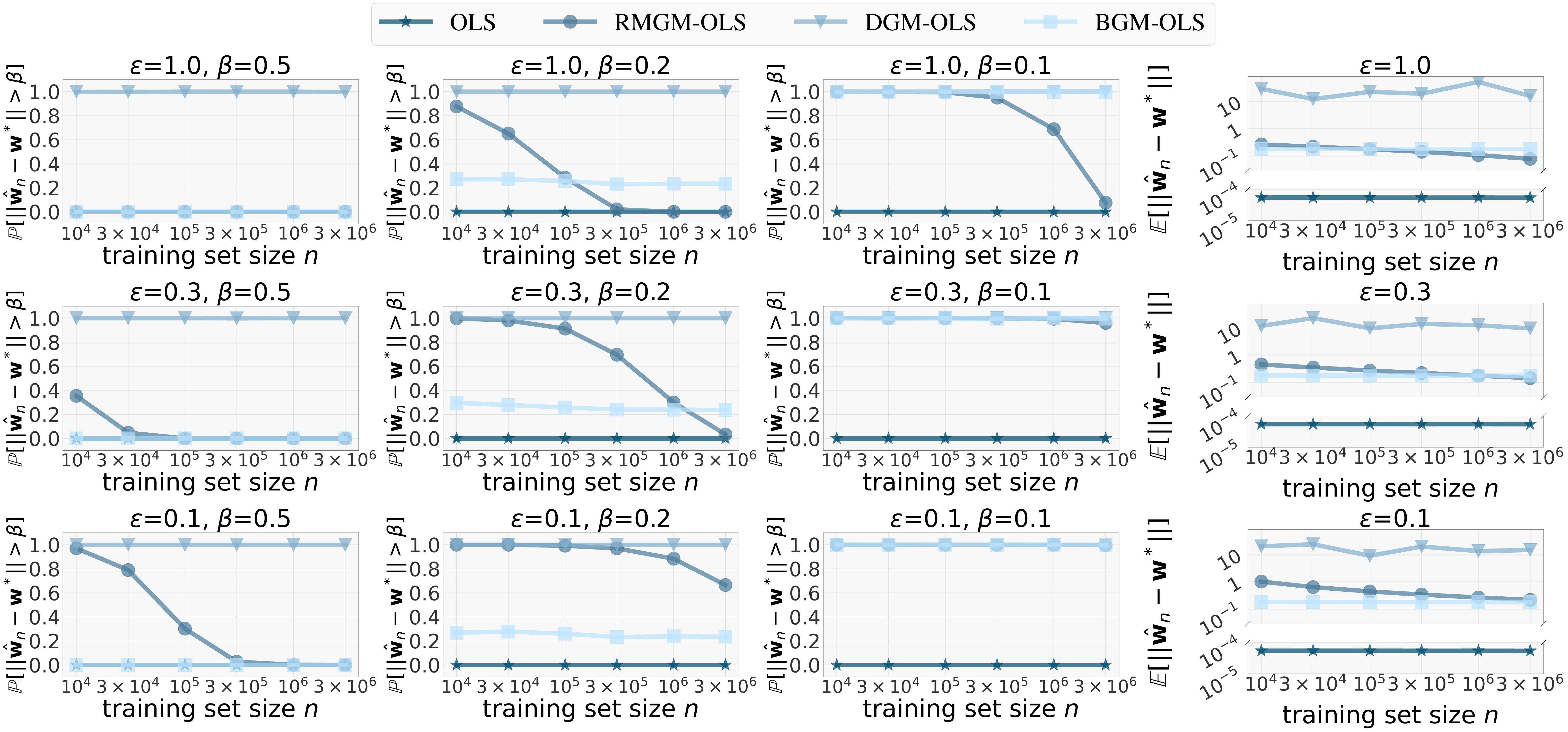}
\caption{$\mathbb{P}\left[\lVert \hat{\bw}_n - \bw^* \rVert > \beta\right]$ and $\mathbb{E}\left[\lVert \hat{\bw}_n - \bw^* \rVert\right]$ as dataset size $n$ increases for different algorithms when $\varepsilon=1.0, 0.3, 0.1$. For all pairs of $(\varepsilon, \beta)$ except two most extreme cases $(0.3, 0.1)$ and $(0.1, 0.1)$, \methodtwoshort{} shows asymptotic tendencies $\plim_{n\to\infty}\mathbb{P}\left[\lVert \hat{\bw}_n^{\methodtwoupper{}} - \bw^* \rVert > \beta\right]=0$. \methodoneshort{} does not show such tendencies even when training set size $n$ is as large as $3\times 10^6$.}
\label{fig:cvg_dis}
\end{figure*}

\begin{figure*}[t!]
\centering
\includegraphics[width=0.9\linewidth]{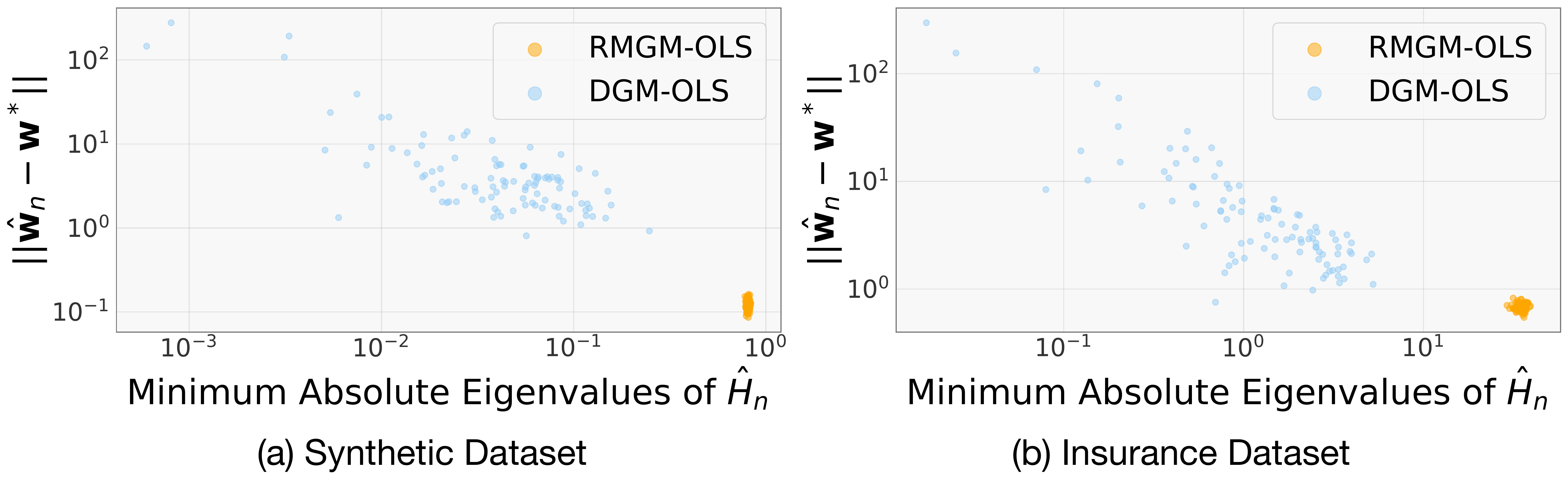}
\caption{Scatter plots of $\ell_2$ distance versus minimum absolute eigenvalue of Hessian matrix. The left figure is for the synthetic dataset when $n=10^6$ and $\varepsilon=1.0$. The right figure is for the \textit{Insurance} dataset when $\varepsilon=1.0$. Each point is processed by a different random seed for \methodoneshort{} and \methodtwoshort{}. Both figures show that the Hessian matrix in \methodoneshort{} is more likely to have small eigenvalues, which further lead to large distance $\lVert \hat{\bw}_n - \bw^* \rVert_2$.}
\label{fig:eigenvalue}
\end{figure*}

In this section, we evaluate \methodoneshort{} and \methodtwoshort{} on both synthetic and real world datasets. 
Our  experiments on synthetic dataset are designed to verify the theoretical asymptotic results in \autoref{sec:method} by increasing the training set size $n$.
We further justify the algorithm performance on five real-world datasets, four from UCI Machine Learning Repository\footnote{\url{https://archive-beta.ics.uci.edu/ml/datasets}}~\citep{Dua:2019} and one from kaggle.

\subsection{Experiment Set-up}
\paragraph{Algorithm set-up.} We evaluate both \methodoneshort{} and \methodtwoshort{}. For $k$ in \methodtwoshort{},
we set $k = \frac{\sqrt{n}}{\sigma_{\varepsilon, \delta}}$ in synthetic dataset experiments and select the best $k$ from $\{10^2, 3\times 10^2, 10^3, 3\times 10^3, 10^4\}$ in real-world dataset experiments. Because of the numerical instability of computing Hessian inverse mentioned early, we add small $\lambda \cdot I$ with $\lambda=10^{-5}$ to all Hessian matrices. 

\paragraph{Baseline.} In addition, we consider the following baselines to help qualify the performance of proposed algorithms.
\begin{itemize}[leftmargin=*,nosep]
	\item \textit{OLS}: The explicit solution for linear regression given training data $(X, Y)$ and serves as the performance's upper bound for private algorithms, i.e. non-private solution.
	\item \textit{\methodthreelong (\methodthreeshort)}: The same data release algorithm in \methodoneshort{}, but has a different training algorithm. Given a released dataset $(X^{\methodthreeupper}, Y^{\methodthreeupper})$ by Gaussian mechanism, \methodthreeshort{} outputs the vanilla ordinary least square solution $\hat{\bw}_n^{\methodthreeupper{}}=\left(\left(X^{\methodthreeupper}\right)^\top X^{\methodthreeupper}\right)^{-1}\left(X^{\methodthreeupper}\right)^\top Y^{\methodthreeupper}$. In other words, it is \methodoneshort{} without training debiasing.
\end{itemize}

\paragraph{Evaluation metric.} 
In the experiments on synthetic datasets, we estimate the probability of the $\ell^2$ distance between the model weights $\hat{\bw}_n$ from each algorithm or baseline and the ground truth model weight $\bw^*$:
$$
    \bbP\left( \left\lVert \hat{\bw}_n - \bw^* \right\rVert > \beta \right).
$$
We also evaluate the expectation of the $\ell^2$ distance between weights for different algorithms:
$$
\bbE\left\lVert \hat{\bw}_n - \bw^* \right\rVert.
$$
If an algorithm is asymptotically optimal, we can see both $\bbP\left( \left\lVert \hat{\bw}_n - \bw^* \right\rVert > \beta \right)$ and $\bbE\left\lVert \hat{\bw}_n - \bw^* \right\rVert$ converge to $0$ when $n$ increases.

For the experiments on real world datasets, we evaluate learned models $\hat{\bw}_n$ by the mean squared loss on the test set.

\subsection{Evaluation on Synthetic Datasets}
\paragraph{Data generation.} 
We define the feature dimension $d=10$. 
Each weight value of the ground truth linear model $\bw^*$ is independently sampled from uniform distribution between $-1/d$ and $1/d$.
A single data point $(\bx, y)$ is sampled as the following: each feature value in $\bx$ is independently sampled from a uniform distribution between $-1$ and $1$; label $y$ is computed as $\left(\bw^*\right)^\top\bx$. Two assumptions for the data distribution $\calP$, \autoref{ass:bound} and \autoref{ass:non-singular}, can be verified.
Moreover, we set $6$ parties in total, $5$ of which have $2$ attributes and the remaining one has $1$ attribute. 

\paragraph{Results.} 
We vary the training set size $n\in\{10^4, 3\times10^4, 10^5, 3\times 10^5,10^6,3\times10^6\}$ and privacy budget $\varepsilon\in\{1, 0.3, 0.1\}$ with fixed $\delta=10^{-5}$.
We estimate the $\mathbb{P}\left[\lVert \bw_n - \bw^* \rVert > \beta\right]$ and $\mathbb{E}\lVert \bw_n - \bw^* \rVert$ for different algorithms with $1000$ random seeds. 
\autoref{fig:cvg_dis} shows how $\mathbb{P}\left[\lVert \bw_n - \bw^* \rVert > \beta\right]$ and $\mathbb{E}\lVert \bw_n - \bw^* \rVert$ of each algorithm change when training set size $n$ increases. 

Regarding two baselines, $\mathbb{P}\left[\lVert \bw_n - \bw^* \rVert > \beta\right]$ of OLS solutions, without any private constraint, are close to the ground truth $\bw^*$ under all $\beta$ with probability 0.
Nonetheless, $\mathbb{P}\left[\lVert \bw_n - \bw^* \rVert  >\beta\right]$ of  \methodthreeshort{} keeps mostly unchanged as $n$ increases. 
Especially, $\mathbb{P}\left[\lVert \bw_n - \bw^* \rVert > 0.1\right]$ stays at $1$ for all $n$.
Such results are expected in BGM-OLS's convergence: $\plim_{n\to\infty} \frac{1}{n}\left(X^{\methodthreeupper}\right)^\top X^{\methodthreeupper} = \bbE_{\bx}\left[ \bx\bx^\top \right] + 4d_{\rm max}\sigma_{\varepsilon, \delta}^2\cdot I$, which introduces a non-diminishing bias $4d_{\rm max}\sigma_{\varepsilon, \delta}^2\cdot I$.

Next, we compare \methodoneshort{} and  \methodtwoshort{}. \methodtwoshort{} outperforms \methodoneshort{} at both the convergence of probability $\mathbb{P}\left[\lVert \bw_n - \bw^* \rVert > \beta\right]$ (the first three figures in \autoref{fig:cvg_dis}) and the expected distance $\mathbb{E}\left[\lVert \bw_n - \bw^* \rVert\right]$ (the last figure in \autoref{fig:cvg_dis}).
\methodtwoshort{} shows the asymptotic tendencies in all values of $\beta$ when $\varepsilon=1.0$.
Although \methodoneshort{} has better rate at $n$ than \methodtwoshort{} theoretically, $n=3\times 10^6$ is not large enough to show the asymptotic tendencies for \methodoneshort{}.

\begin{table*}[t!]
\centering
\resizebox{\linewidth}{!}{
\begin{tabular}{c|c|c|ccc|ccc|ccc}
\toprule
\multirow{3}{*}{\textbf{Dataset}} & \multirow{3}{*}{\textbf{Statistics}} &\multicolumn{10}{c}{\textbf{Method}} \\
\cmidrule{3-12}
&   &   \multirow{2}{*}{OLS} & \multicolumn{3}{c|}{$\varepsilon=1.0$} & \multicolumn{3}{c|}{$\varepsilon=0.3$} & \multicolumn{3}{c}{$\varepsilon=0.1$} \\
 &  &  &  \methodonebrev{}     &   \methodtwobrev{}    &  \methodthreebrev{}    &  \methodonebrev{}     &   \methodtwobrev{}    &  \methodthreebrev{}   &   \methodonebrev{}     &   \methodtwobrev{}    &  \methodthreebrev{}    \\
 \midrule
Insurance & $n=1070, ~d=9, ~m=5$ & $0.008$ & $0.7015$ & $\mathbf{0.0791}$ & $0.0805$ & $0.7550$ & $\mathbf{0.0782}$ & $0.0850$ & $0.7263$ & $\mathbf{0.0793}$ & $0.0832$\\
Bike & $n=13903, ~d=13, ~m=5$ & $0.017$ & $0.8105$ & $\mathbf{0.0581}$ & $0.0691$ & $0.9080$ & $0.0711$ & $\mathbf{0.0703}$ & $0.8792$ & $\mathbf{0.0700}$ & $0.0707$\\
Superconductor & $n=17010, ~d=81, ~m=10$  & $0.009$ & $0.9794$ & $\mathbf{0.0659}$ & $0.0670$ & $1.0075$ & $0.0707$ & $\mathbf{0.0704}$ & $0.9220$ & $\mathbf{0.0699}$ & $0.0704$ \\
GPU & $n=193280, ~d=14, ~m=5$ & $0.007$ & $0.6953$ & $\mathbf{0.0137}$ & $0.0158$ & $0.7843$ & $\mathbf{0.0160}$ & $\mathbf{0.0160}$ & $0.7822$ & $0.0165$ & $\mathbf{0.0160}$ \\
Music Song & $n=412276, ~d=90, ~m=10$  & $0.011$ & $1.0167$ & $\mathbf{0.0202}$ & $0.7194$ & $1.6462$ & $\mathbf{0.1039}$ & $0.7479$ & $1.5583$ & $\mathbf{0.5654}$ & $0.7508$\\
 \bottomrule
\end{tabular}
}
\caption{Mean squared losses on real world datasets. \methodtwoshort{} achieves the lowest losses in most settings (12 out of 15).}
\label{tab:real_world_result}
\end{table*}

\methodoneshort{} is even much worse than \methodthreeshort{}, which is almost random guess.
It is caused by the small eigenvalue issue discussed in \autoref{sec:method}.
To illustrate it, Figure \ref{fig:eigenvalue}~(a) shows the scatter plot, where the $x$-axis is minimum eigenvalues of the Hessian matrix $\hat{H}_n$ and $y$-axis is the distance between our solutions and the optimal solution $\lVert \hat{\bw}_n - \bw^* \rVert$. 
Each point is processed by a different random seed for \methodoneshort{} and \methodthreeshort{} when $n=10^6$ and $\varepsilon=1.0$.
$\lVert \hat{\bw}_n - \bw^* \rVert$ and the minimum absolute eigenvalues of $\hat{H}_n$ have a strong positive correlation.
With a certain probability, the minimum eigenvalue of \methodoneshort{} is smaller than $10^{-2}$ and corresponding $\lVert \bw_n - \bw^* \rVert$ is larger than $10$.

Overall \methodtwoshort{} has the best empirical performance across various settings of $\varepsilon$ and $n$ on the synthetic data, as its asymptotically optimality is verified and it consistently outperforms two other private algorithms when $n$ is large enough. Though \methodoneshort{} seems to have stronger theoretical guarantee in the aspect of rate in $n$, its poor empirical performance comes from two aspects: 1. small eigenvalues occur due to the design of the training algorithm; 2. extremely large $n$ is necessary to show the asymptotic optimality due to the worse rates of $d, d_{\rm max}$ and $\sigma_{\varepsilon, \delta}$.

\subsection{Evaluation on Real World Datasets}
\paragraph{Dataset.}
We experiment with five datasets: 
\begin{itemize}[leftmargin=*,nosep]
    \item \textit{Insurance}~\citep{lantz2019machine}: predicting the insurance premium from features including age, bmi, expenses, etc.
    \item \textit{Bike}~\citep{fanaee2014event}: predicting the count of rental bikes from features such as season, holiday, etc.
    \item \textit{Superconductor}~\citep{hamidieh2018data}: predicting critical temperature from chemical features.
    \item \textit{GPU}~\citep{ballester2019sobol, nugteren2015cltune}: predicting Running time for multiplying two $2048\times 2048$. matrices using a GPU OpenCL SGEMM kernel with varying parameters.
    \item \textit{Music Song}~\citep{Bertin-Mahieux2011}: predicting the release year of a song from audio features.
\end{itemize}
We split the original dataset into train and test by the ratio $4:1$.
The number of training data $n$, the number of features $d$ and the number of parties are listed in Table \ref{tab:real_world_result}. The attributes are evenly distributed among parties. All features and labels are normalized into $[0, 1]$.

\paragraph{Results.}
For each dataset, we evaluate OLS and three differentially private algorithms by the mean squared loss on the test split. 
Table \ref{tab:real_world_result} shows the results for $\varepsilon\in\{0.1, 0.3, 1.0\}$ and $\delta=10^{-5}$. 
We can check that the loss of \methodoneshort{} is usually much larger than others and \methodtwoshort{} achieves the lowest losses for most cases (12 out of 15).
Moreover, \autoref{fig:eigenvalue}~(b) shows that \methodoneshort{} has the small eigenvalue problem as well in the real world dataset experiments.
These results are consistent with the results on synthetic dataset.
We therefore recommend \methodtwoshort{} as a practical solution to privately release the dataset and build the linear regression models.

\section{Related Work}
\paragraph{Differentially private dataset release.} 
Many recent works \citep{sheffet2017differentially, gondara2020differentially, xie2018differentially, jordon2018pate,lee2019synthesizing, xu2017dppro, kenthapadi2012privacy} study the differentially private data release algorithms.
However, those algorithms either only serve for data release from a \emph{single-party} \citep{sheffet2017differentially, gondara2020differentially}, or focus on the feature dimension reduction or empirical improvement \citep{lee2019synthesizing, xu2017dppro, kenthapadi2012privacy}, which is orthogonal to the study of asymptotical optimality w.r.t. dataset size.
In \cite{sheffet2017differentially} and \cite{gondara2020differentially}, the random Gaussian projection matrices in their method contribute to the differential privacy guarantee,
hence the sharing of projection matrix would violate the privacy guarantee between parties. 
Nevertheless, without sharing this projection matrix, the utility cannot be guaranteed anymore.
In \cite{xie2018differentially} and \cite{jordon2018pate}, they train a  differentially private GAN.
However, it is not obvious to rigorously privately share data information during their training when each party holds different attributes but same instances.
\cite{lee2019synthesizing} proposes a random mixing method and also analyzes the linear model. 
However, the way they mix only works for realizable linear data.
It is not able to be extended to the general linear regression and the asymptotic optimality guarantee.
\cite{xu2017dppro} and \cite{kenthapadi2012privacy} focus on the feature dimension reduction, which is orthogonal to the study of asymptotical optimality w.r.t. dataset size.

\paragraph{Asymptotically optimal differentially private convex optimization.} A large amount of work study differentially private optimization for convex problems \citep{bassily2014private, bassily2019private, feldman2020private} or particularly for linear regression \citep{sheffet2017differentially, kasiviswanathan2011can, chaudhuri2012convergence}.
They mainly differ from our work in the sense that their goal is to release the final model while ours is to release the dataset.

\paragraph{Linear regression in vertical federated learning.} Linear regression is a fundamental machine learning task. \cite{hfn11,nwi+13,gsb+17} studying linear regression over vertically partitioned datasets based on secure multi-party computation. However, cryptographic protocols such as Homomorphic Encryption~\citep{hfn11,nwi+13} and garbled circuits~\citep{nwi+13,gsb+17} lead to heavy overhead on computation and communication. From this aspect, DP-based techniques are more practical.

\section{Conclusion}
We propose and analyze two differentially private algorithms under multi-party setting for linear regression, and theoretically both of them are asymptotically optimal with increasing dataset size.
Empirically, \methodtwoshort{} has the best performance on both synthetic datasets and real-world datasets, while extremely large training set size $n$ is necessary for \methodoneshort{}.
%
We hope our work can bring more attention to the need for multi-party data release algorithms and we believe that ML practitioners would benefit from such effort in the era of privacy.

\paragraph{Future work.}
We focus on linear regression only, and one future direction is to extend our algorithm to classification, e.g. logistic regression, while achieving the same asymptotic optimality.
In addition, we assume different parties own the same set of data subjects. Another future direction is to relax this assumption: the set of subjects owned by different parties might be slightly different. 

\begin{acknowledgements}
The authors thank Xiaowei Zhang for drawing pictures in Figure \ref{fig:motivating_example}. RW and KQW are supported by grants from the National Science Foundation NSF (IIS-2107161, III-1526012, IIS-1149882, and IIS-1724282), and the Cornell Center for Materials Research with funding from the NSF MRSEC program (DMR-1719875), and SAP America.
\end{acknowledgements}

\bibliographystyle{plainnat}
\bibliography{main}

\appendix
\onecolumn
\setcounter{theorem}{0}
\setcounter{lemma}{0}
\setcounter{assumption}{0}

\section{Proofs of Useful Lemmas}
\begin{lemma}[Gaussian mechanism]
For any deterministic real-valued function $f:\calD \to \bbR^m$ with sensitivity $S_f$, we can define a randomized function by adding Gaussian noise to $f$:
$$
f^{dp}(D):=f(D) + \calN\left(\mathbf{0}, S_f^2 \sigma^2\cdot I\right),
$$
where $\calN\left(\mathbf{0}, S_f^2 \sigma^2\cdot I\right)$ is a multivariate normal distribution with mean $\mathbf{0}$ and co-variance matrix $S_f^2\sigma^2$ multiplying a $m\times m$ identity matrix $I$. When $\sigma \geq \frac{\sqrt{2\log\left(1 / (1.25\delta)\right)} }{\varepsilon}$, $f^{dp}$ is $(\varepsilon,\delta)$-differentially private.
\end{lemma}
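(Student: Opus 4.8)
The plan is to follow the classical privacy-loss-random-variable analysis for the Gaussian mechanism. First I would reduce the $m$-dimensional problem to a single dimension. Fix two adjacent datasets $D, D'$ and set $v := f(D) - f(D')$, so that $\lVert v\rVert \leq S_f$ by the sensitivity bound. Because the noise covariance $S_f^2\sigma^2\cdot I$ is isotropic, the two output densities are related by a shift by $v$, and their likelihood ratio at a point $y$ depends on $y$ only through its component along $v$. Writing $y = f(D) + z$ with $z$ drawn from $\calN(\mathbf{0}, S_f^2\sigma^2\cdot I)$, the \emph{privacy loss}
$$
\mathcal{L}(y) := \ln\frac{p_{f(D)}(y)}{p_{f(D')}(y)}
$$
reduces, after expanding the two quadratic exponents and cancelling the normalizing constants, to $\mathcal{L} = \frac{2\,z^\top v + \lVert v\rVert^2}{2 S_f^2\sigma^2}$.

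Second, I would identify the law of $\mathcal{L}$. Since $z^\top v$ is distributed as $\calN(0, S_f^2\sigma^2\lVert v\rVert^2)$, the privacy loss is itself a univariate Gaussian with mean $\frac{\lVert v\rVert^2}{2 S_f^2\sigma^2}$ and variance $\frac{\lVert v\rVert^2}{S_f^2\sigma^2}$. Taking the worst case $\lVert v\rVert = S_f$ yields a variable with mean $\frac{1}{2\sigma^2}$ and standard deviation $\frac{1}{\sigma}$, which upper-bounds the privacy loss for all admissible $v$.

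Third, I would invoke the standard reduction that $(\varepsilon,\delta)$-differential privacy holds as soon as $\bbP_{y\sim f^{dp}(D)}[\mathcal{L}(y) > \varepsilon] \leq \delta$: one partitions the output range into the region where the loss is at most $\varepsilon$, on which the $e^{\varepsilon}$ multiplicative bound holds pointwise, and its complement, whose probability mass is absorbed by the additive $\delta$. It then remains to bound the upper tail of a $\calN\!\left(\frac{1}{2\sigma^2}, \frac{1}{\sigma^2}\right)$ variable at level $\varepsilon$. Rewriting this as $\bbP[\calN(0,1) > \sigma\varepsilon - \frac{1}{2\sigma}]$ and applying the Gaussian tail inequality $\bbP[\calN(0,1) > t] \leq \frac{1}{t\sqrt{2\pi}}e^{-t^2/2}$, I would substitute $\sigma = \frac{\sqrt{2\log(1.25/\delta)}}{\varepsilon}$ and check that the resulting estimate is at most $\delta$.

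The main obstacle is this final tail-bound calculation: pinning down the constant $1.25$ requires the hypothesis $0 < \varepsilon \leq 1$, together with a careful two-sided estimate verifying both that the threshold $t = \sigma\varepsilon - \frac{1}{2\sigma}$ is strictly positive (so the tail inequality applies) and that the polynomial prefactor $\frac{1}{t\sqrt{2\pi}}$ multiplied by $e^{-t^2/2}$ is dominated by $\delta$ at the chosen $\sigma$. This is the delicate step; the reduction to one dimension and the identification of $\mathcal{L}$ as Gaussian are routine Gaussian algebra.
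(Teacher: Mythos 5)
Your proposal is correct, and it is essentially ``the paper's'' proof: the paper never proves this lemma itself but imports it from \cite{dwork2014algorithmic}, and your argument --- reduce to one dimension via the privacy-loss random variable $\mathcal{L} = \frac{2 z^\top v + \lVert v\rVert^2}{2S_f^2\sigma^2}$, identify it as $\calN\bigl(\tfrac{1}{2\sigma^2},\tfrac{1}{\sigma^2}\bigr)$ in the worst case $\lVert v\rVert = S_f$, reduce $(\varepsilon,\delta)$-DP to the one-sided condition $\bbP[\mathcal{L}>\varepsilon]\leq\delta$, and finish with the Gaussian tail estimate --- is exactly the proof of Theorem A.1 in that reference, where the constant-chasing you defer as the ``delicate step'' is carried out in full. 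One reconciliation note: the appendix restatement you were asked to prove contains a typo, $\sigma \geq \sqrt{2\log\left(1/(1.25\delta)\right)}/\varepsilon$ with no restriction on $\varepsilon$, whereas the main-text Lemma 1 (and the truth) requires $\sigma \geq \sqrt{2\log\left(1.25/\delta\right)}/\varepsilon$ together with $0<\varepsilon\leq 1$; since $\log\left(1/(1.25\delta)\right) < \log\left(1.25/\delta\right)$, the literal appendix statement permits less noise than the standard computation supports, so what your argument establishes (correctly) is the main-text version, and your insistence that the hypothesis $0<\varepsilon\leq 1$ is needed for the final tail calculation is exactly right.
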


\begin{lemma}[JL Lemma for inner-product preserving (Bernoulli)]
Suppose $S$ be an arbitrary set of $l$ points in $\bbR^d$ and suppose $s$ is an upper bound for the maximum L2-norm for vectors in $S$. Let $B$ be a $k\times d$ random matrix, where $B_{ij}$ are independent random variables, which take value $1$ and value $-1$ with probability $1/2$.
	With the probability at least $ 1 - (l+1)^2\exp\left(-k\left(\frac{\beta^2}{4} - \frac{\beta^3}{6}\right)\right)$,
	$$
	\frac{\mathbf{u}^\top\mathbf{v}}{s^2} - 4\beta \leq \frac{\left( B\mathbf{u}/\sqrt{k}\right)^\top\left( B\mathbf{v} / \sqrt{k}\right)}{s^2} \leq \frac{\mathbf{u}^\top\mathbf{v}}{s^2} + 4\beta.
	$$
\end{lemma}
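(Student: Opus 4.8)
The plan is to reduce the inner-product statement to a one-vector norm-preservation statement, and then assemble the global claim by polarization and a union bound. The single-vector fact I rely on is the Bernoulli version of Theorem 1.1 in \cite{achlioptas2003database}: for any \emph{fixed} $\bw\in\bbR^d$,
$$
\bbP\left[\left\lvert \lVert B\bw/\sqrt{k}\rVert^2 - \lVert\bw\rVert^2\right\rvert > \beta\lVert\bw\rVert^2\right] \le 2\exp\left(-k\left(\frac{\beta^2}{4}-\frac{\beta^3}{6}\right)\right).
$$
Granting this, the statement for all pairs in $S$ follows by applying it to a carefully chosen finite family of vectors and taking a union bound. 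To establish the one-vector bound itself I would write $\lVert B\bw/\sqrt{k}\rVert^2 = \tfrac1k\sum_{i=1}^k Z_i^2$ with $Z_i=\sum_{j=1}^d B_{ij}w_j$, where the rows of $B$ are independent. Since the $B_{ij}$ are independent mean-zero, unit-variance $\pm1$ variables, $\bbE[Z_i^2]=\lVert\bw\rVert^2$ and the $Z_i^2$ are i.i.d.; a Chernoff argument on $\sum_i Z_i^2$ — controlling the moment generating function $\bbE[e^{tZ_i^2}]$ of a squared Rademacher sum and optimizing over $t$ — produces exactly the exponent $\tfrac{\beta^2}{4}-\tfrac{\beta^3}{6}$ on each tail. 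This MGF estimate is the technically delicate step and the main obstacle if one insists on reproving it from scratch rather than citing \cite{achlioptas2003database}.

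For the extension to inner products I would use the linearity of $B$ together with the polarization identity $\bu^\top\bv = \tfrac12\left(\lVert\bu\rVert^2 + \lVert\bv\rVert^2 - \lVert\bu-\bv\rVert^2\right)$, which gives
$$
\left(B\bu/\sqrt{k}\right)^\top\left(B\bv/\sqrt{k}\right) - \bu^\top\bv = \tfrac12\left(\Delta_{\bu} + \Delta_{\bv} - \Delta_{\bu-\bv}\right),
$$
where $\Delta_{\bw}:=\lVert B\bw/\sqrt{k}\rVert^2 - \lVert\bw\rVert^2$. On the event that the one-vector bound holds simultaneously for every $\bw$ in the family $\{\bu:\bu\in S\}\cup\{\bu-\bv:\bu,\bv\in S\}$, we have $\lvert\Delta_{\bw}\rvert\le\beta\lVert\bw\rVert^2$ for each such $\bw$; bounding $\lVert\bu\rVert^2,\lVert\bv\rVert^2\le s^2$ and $\lVert\bu-\bv\rVert^2\le 4s^2$ (by Cauchy--Schwarz) then yields $\left\lvert\left(B\bu/\sqrt{k}\right)^\top\left(B\bv/\sqrt{k}\right) - \bu^\top\bv\right\rvert \le 3\beta s^2 \le 4\beta s^2$, which is the asserted additive error $4\beta$ after dividing through by $s^2$. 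The loose constant in the statement is comfortably absorbed here.

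Finally I would bound the failure probability by the union over the distinct vectors in the above family. There are $l$ individual points and at most $\binom{l}{2}$ distinct pairwise differences, i.e.\ at most $\tfrac{l(l+1)}{2}$ vectors; since each carries a two-sided tail of $2\exp(-k(\tfrac{\beta^2}{4}-\tfrac{\beta^3}{6}))$, the union bound gives total failure probability at most $l(l+1)\le (l+1)^2$ times $\exp(-k(\tfrac{\beta^2}{4}-\tfrac{\beta^3}{6}))$, which matches the stated high-probability guarantee. The only genuine bookkeeping is verifying that the count of relevant vectors, multiplied by the factor-two two-sided tails, stays below $(l+1)^2$ — which it does, since $\tfrac{l(l+1)}{2}\cdot 2 = l(l+1)\le (l+1)^2$ — so no additional slack in $\beta$ or $k$ is needed.
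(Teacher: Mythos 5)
Your proposal is correct and takes essentially the same route the paper intends: the paper supplies no explicit proof of this lemma, citing Theorem~1.1 of \cite{achlioptas2003database} for the single-vector bound $\bbP\left[\left\lvert\lVert B\bw/\sqrt{k}\rVert^2-\lVert\bw\rVert^2\right\rvert>\beta\lVert\bw\rVert^2\right]\le 2\exp\left(-k\left(\tfrac{\beta^2}{4}-\tfrac{\beta^3}{6}\right)\right)$ and asserting the inner-product extension, which is exactly your polarization-plus-union-bound argument. Your bookkeeping also reproduces the stated constants: the family $S\cup\{\bu-\bv:\bu,\bv\in S\}$ has at most $l(l+1)/2$ vectors, so the two-sided union bound costs $l(l+1)\le(l+1)^2$, and the polarization error $\tfrac12\left(\beta s^2+\beta s^2+4\beta s^2\right)=3\beta s^2$ sits comfortably inside the asserted $4\beta s^2$.
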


\begin{lemma}
\label{lem:mixed_lower_bound}
	\begin{enumerate}
		\item $\forall x\in[0, 1]$, $ - \log\left(1 - x\right) - x \geq  \frac{x^2}{2}$.
		\item $\forall x\in[0, 1]$, $x - \log\left(1 + x\right) \geq \frac{x^2}{4}$.
		\item $\forall x > 1, x - \log\left(1 + x\right) \geq \frac{x}{2}$.
	\end{enumerate}
\end{lemma}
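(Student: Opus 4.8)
The plan is to prove all three inequalities by the same elementary device, with no probabilistic input or appeal to the earlier lemmas: for each claimed bound I would move every term to one side to define an auxiliary single-variable function $\phi$, observe that $\phi$ either vanishes or is manifestly nonnegative at the left endpoint of the stated interval, and then control the sign of $\phi'$ so that $\phi$ stays nonnegative throughout. Since each part is a one-dimensional inequality on an interval, this reduces the whole lemma to three derivative computations together with an endpoint evaluation.

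For part 1, I would set $\phi_1(x) = -\log(1-x) - x - \frac{x^2}{2}$ on $[0,1)$, handling the case $x=1$ separately since there $-\log(1-x)\to+\infty$ and the bound is trivial. Then $\phi_1(0)=0$ and a short computation gives $\phi_1'(x) = \frac{1}{1-x} - 1 - x = \frac{x^2}{1-x}$, which is $\geq 0$ on $[0,1)$; hence $\phi_1$ is nondecreasing and stays at or above its value $0$ at the origin. Equivalently, one may expand $-\log(1-x) = \sum_{n\ge 1} x^n/n = x + \tfrac{x^2}{2} + \cdots$ and discard the nonnegative tail $\sum_{n\ge 3} x^n/n \geq 0$.

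For part 2 the analogous function is $\phi_2(x) = x - \log(1+x) - \frac{x^2}{4}$ on $[0,1]$, with $\phi_2(0)=0$ and $\phi_2'(x) = 1 - \frac{1}{1+x} - \frac{x}{2} = \frac{x}{1+x} - \frac{x}{2} = \frac{x(1-x)}{2(1+x)}$. This is nonnegative precisely because $x\ge 0$ and $1-x\ge 0$ on $[0,1]$, so once again $\phi_2$ increases off the endpoint value $0$ and the stated bound follows.

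For part 3 I would write $\phi_3(x) = x - \log(1+x) - \frac{x}{2} = \frac{x}{2} - \log(1+x)$, so that $\phi_3'(x) = \frac12 - \frac{1}{1+x}$, which is nonnegative exactly when $x \geq 1$; thus $\phi_3$ is nondecreasing on the stated range. The crux, and the main obstacle, lies in the endpoint control here: unlike parts 1 and 2, the value of $\phi_3$ at the left end of the interval is not automatically $0$, so monotonicity alone does not close the argument, and the proof must pin down the exact threshold beyond which $\phi_3$ is nonnegative and verify the boundary value there. Once such a threshold is established, monotonicity propagates nonnegativity to all larger $x$ and completes the claim.
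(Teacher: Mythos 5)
Your parts 1 and 2 are correct and are verbatim the paper's own argument: the paper defines $f_1(x) = -\log(1-x) - x - \tfrac{x^2}{2}$ with $f_1'(x) = \tfrac{x^2}{1-x} \geq 0$ and $f_2(x) = x - \log(1+x) - \tfrac{x^2}{4}$ with $f_2'(x) = \tfrac{x(1-x)}{2(1+x)} \geq 0$ on $[0,1]$, exactly as you do.

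For part 3, the obstacle you flag is real, but the resolution you defer to (``pin down the exact threshold and verify the boundary value there'') cannot succeed, because the statement as printed is simply false on part of its stated range: with $\phi_3(x) = \tfrac{x}{2} - \log(1+x)$ one has $\phi_3(1) = \tfrac{1}{2} - \log 2 \approx -0.193 < 0$, and since $\phi_3$ is increasing only from $x=1$ onward, $\phi_3$ remains negative on the whole interval $1 < x < x_0$ where $x_0 \approx 2.51$ solves $\tfrac{x}{2} = \log(1+x)$ (e.g.\ at $x = 1.5$, $x - \log(1+x) \approx 0.584 < 0.75 = \tfrac{x}{2}$). No threshold argument proves the claim for \emph{all} $x > 1$. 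What the paper's own proof actually establishes is the weaker inequality $x - \log(1+x) \geq \tfrac{x}{4}$: it defines $f_3(x) = x - \log(1+x) - \tfrac{x}{4}$, computes $f_3'(x) = \tfrac{3x-1}{4(1+x)} > 0$ (for $x \geq 1$), and uses the now-positive endpoint value $f_3(1) = \tfrac{3}{4} - \log 2 > 0$ --- i.e.\ the constant $\tfrac{1}{2}$ in the lemma statement is an error, and with $\tfrac{1}{4}$ your exact template (endpoint check plus monotonicity) closes immediately. The weakening is harmless where the lemma is used, since in the proof of Theorem 1 these bounds only enter inside $O(\cdot)$ exponents; so the correct fix is to repair the statement to $\tfrac{x}{4}$ and run your argument, not to hunt for a threshold for the $\tfrac{x}{2}$ version.
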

\begin{proof}
Define $f_1(x) := - \log\left(1 - x\right) - x - \frac{x^2}{2}$. $f_1'(x) = \frac{x^2}{1-x} \geq 0.$ Thus $f_1(x)$ increases on $[0, 1]$ and $f_1(x)\geq f_1(0) = 0$.

Define $f_2(x) := x - \log\left(1 + x\right) - \frac{x^2}{4}$. $f_2'(x) = \frac{x(1-x)}{2(1+x)}$. $f_2(x)$ increases on $[0, 1]$ and $f_2(x)\geq f_2(0)=0$.

Define $f_3(x) := x - \log\left(1 + x\right) - \frac{x}{4}$. $f_3'(x) = \frac{3x-1}{4(1+x)} > 0$. $f_3(x)$ increases on $[0, 1]$ and $f_3(x)\geq f(1) >0$.
\end{proof}

\begin{lemma}
\label{lem:reduction}
Denote $\hat{H}_n=\frac{1}{n}X^\top X$, $\hat{C}_n = \frac{1}{n}X^\top Y$, $H = \bbE_{(\bx, y)\sim\calP}\left[\bx\bx^\top\right]$ and $C=\bbE_{(\bx, y)\sim\calP}\left[\bx\cdot y\right]$. Assume $\lVert \hat{H}_n^{\sf pub} - \hat{H}_n \rVert\leq\beta, \lVert \hat{C}_n^{\sf pub} - \hat{C}_n \rVert\leq\beta$ with prob $1 - f(\beta)$. We have that when $\beta \leq \frac{2\lVert  C \rVert \lVert  H^{-1} \rVert  + 5}{8}$,
$$
\mathbb{P}_{X, \by\sim \calD, R_1, R_2}\left[\lVert\hat{\bw}_n^{\sf pub} - \bw^*\rVert \leq \beta\right] \geq 1 - h(\beta),
$$
where $\hat{\bw}_n^{\sf pub} = \left(\hat{H}_n^{\sf pub}\right)^{-1}\hat{C}_n^{\sf pub}$, $c:=\lVert  C \rVert \lVert  H^{-1} \rVert^2  + 2 \lVert  H^{-1} \rVert$ and  $h(\beta)= f(\beta / 2c) + d^2 \exp\left( - \frac{n\beta^2}{8c^2d^2} \right) + d \exp\left( - \frac{n\beta^2}{8c^2d} \right)$.
\end{lemma}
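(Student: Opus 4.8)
The plan is to reduce the error in the regression solution to errors in the Hessian estimate $\hat{H}_n^{\sf pub}$ and the cross-moment estimate $\hat{C}_n^{\sf pub}$, treating the assumed bound $f$ as a black box that already captures the contribution of the privacy mechanism. Concretely, I would route the comparison between $\hat{\bw}_n^{\sf pub}=(\hat{H}_n^{\sf pub})^{-1}\hat{C}_n^{\sf pub}$ and $\bw^*=H^{-1}C$ through the empirical quantities $\hat{H}_n=\frac1n X^\top X$ and $\hat{C}_n=\frac1n X^\top Y$: by the triangle inequality $\|\hat{H}_n^{\sf pub}-H\|\le\|\hat{H}_n^{\sf pub}-\hat{H}_n\|+\|\hat{H}_n-H\|$, and likewise for $C$. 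The first summand is controlled by the hypothesis, invoked at scale $\beta/2c$ (this is exactly the source of the $f(\beta/2c)$ term in $h$), while the second is a pure data-concentration statement independent of the mechanism.

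For the concentration step I would use \autoref{ass:bound}: since every attribute is bounded by $1$, each entry of $\bx\bx^\top$ and of $\bx\cdot y$ lies in $[-1,1]$, so $\hat{H}_n$ and $\hat{C}_n$ are averages of bounded i.i.d. matrices and vectors by \autoref{ass:data_dist}. Applying Hoeffding's inequality entrywise and union-bounding over the $d^2$ entries of $\hat{H}_n-H$ (respectively the $d$ entries of $\hat{C}_n-C$), with the per-entry threshold chosen so that the Frobenius (respectively Euclidean) norm is at most $\beta/2c$, yields exactly the failure probabilities $d^2\exp(-n\beta^2/8c^2d^2)$ and $d\exp(-n\beta^2/8c^2d)$. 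A union bound over these two events together with the hypothesis event then gives, with probability at least $1-h(\beta)$, the bounds $\|\hat{H}_n^{\sf pub}-H\|\le\beta/c$ and $\|\hat{C}_n^{\sf pub}-C\|\le\beta/c$.

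The remaining, and most delicate, step is a deterministic matrix-inverse perturbation bound on this good event. Writing $\Delta_H=\hat{H}_n^{\sf pub}-H$ and $\Delta_C=\hat{C}_n^{\sf pub}-C$, and combining $\hat{H}_n^{\sf pub}\hat{\bw}_n^{\sf pub}=\hat{C}_n^{\sf pub}$ with $H\bw^*=C$, one obtains the clean identity $\hat{\bw}_n^{\sf pub}-\bw^*=(\hat{H}_n^{\sf pub})^{-1}(\Delta_C-\Delta_H\bw^*)$. I would then bound $\|(\hat{H}_n^{\sf pub})^{-1}\|$ through a Neumann-series estimate $\|H^{-1}\|/(1-\|H^{-1}\|\,\|\Delta_H\|)$, which requires $\|H^{-1}\|\,\|\Delta_H\|<1$ for invertibility, and bound $\|\bw^*\|\le\|H^{-1}\|\,\|C\|$. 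Substituting $\|\Delta_H\|,\|\Delta_C\|\le\beta/c$ and simplifying with the definition $c=\|C\|\,\|H^{-1}\|^2+2\|H^{-1}\|$ delivers $\|\hat{\bw}_n^{\sf pub}-\bw^*\|\le\beta$. The hypothesis $\beta\le(2\|C\|\,\|H^{-1}\|+5)/8$ is what guarantees the smallness needed both for the noisy Hessian to be invertible and for the series amplification to close at $\beta$; note that the extra $\|H^{-1}\|$ in $c$, beyond the first-order coefficient $\|C\|\,\|H^{-1}\|^2+\|H^{-1}\|$ coming from $H^{-1}\Delta_C$ and $H^{-1}\Delta_H\bw^*$, is precisely the slack that absorbs the higher-order tail of the Neumann expansion.

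I expect the perturbation and invertibility control to be the main obstacle: one must verify that the privatized, finite-sample Hessian $\hat{H}_n^{\sf pub}$ is invertible with the stated probability and that its inverse does not blow up, which is exactly the role of the threshold on $\beta$. A secondary subtlety is bookkeeping the randomness: the hypothesis bound $f$ is over the joint data-and-noise distribution (the variables $X,\by$ and the noise $R_1,R_2$), whereas the Hoeffding bounds are over the data sampling alone, so the concluding union bound must be taken over all three events simultaneously, consistent with the probability measure appearing in the statement.
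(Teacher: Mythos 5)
Your overall architecture mirrors the paper's proof: route the comparison through the population moments $H$ and $C$ by the triangle inequality, control $\lVert \hat{H}_n - H\rVert$ and $\lVert \hat{C}_n - C\rVert$ by entrywise Hoeffding plus a union bound (your thresholds and failure probabilities match the lemma's terms exactly), and finish with a deterministic perturbation argument on the good event. Your perturbation identity $\hat{\bw}_n^{\sf pub}-\bw^* = \left(\hat{H}_n^{\sf pub}\right)^{-1}\left(\Delta_C - \Delta_H\bw^*\right)$ with a Neumann bound on $\lVert(\hat{H}_n^{\sf pub})^{-1}\rVert$ is a legitimate variant of the paper's splitting $\left((\hat{H}_n^{\sf pub})^{-1}-H^{-1}\right)\hat{C}_n^{\sf pub} + H^{-1}\left(\hat{C}_n^{\sf pub}-C\right)$ combined with a self-bounding inequality for $\lVert(\hat{H}_n^{\sf pub})^{-1}-H^{-1}\rVert$.

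The gap is in your final step, "simplifying with the definition of $c$ delivers $\lVert\hat{\bw}_n^{\sf pub}-\bw^*\rVert\le\beta$": with $\lVert\Delta_H\rVert,\lVert\Delta_C\rVert\le\beta/c$ your chain of inequalities gives
$$
\lVert\hat{\bw}_n^{\sf pub}-\bw^*\rVert \;\le\; \frac{\lVert H^{-1}\rVert\left(1+\lVert H^{-1}\rVert\,\lVert C\rVert\right)}{1-\lVert H^{-1}\rVert\beta/c}\cdot\frac{\beta}{c} \;=\; \frac{\lVert H^{-1}\rVert\left(1+\lVert H^{-1}\rVert\,\lVert C\rVert\right)\,\beta}{c-\lVert H^{-1}\rVert\beta},
$$
and requiring the right-hand side to be at most $\beta$ is, after substituting $c=\lVert C\rVert\lVert H^{-1}\rVert^2+2\lVert H^{-1}\rVert$, \emph{exactly} the condition $\beta\le 1$: the extra slack $\lVert H^{-1}\rVert$ in $c$ absorbs the Neumann tail $\lVert H^{-1}\rVert\beta$ only in that range. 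The lemma's hypothesis permits $\beta$ up to $(2\lVert C\rVert\lVert H^{-1}\rVert+5)/8$, which exceeds $1$ whenever $\lVert C\rVert\lVert H^{-1}\rVert>3/2$, and for such $\beta$ your bound does not close at $\beta$ (it only gives a constant strictly larger than $1$ times $\beta$). The device you are missing is the paper's prove-then-rescale structure: it first establishes a Lipschitz-type statement at the perturbation scale --- if $\lVert\hat{H}_n^{\sf pub}-H\rVert,\lVert\hat{C}_n^{\sf pub}-C\rVert\le\gamma$ with $\gamma\le 1/(2\lVert H^{-1}\rVert)$, then $\lVert\hat{\bw}_n^{\sf pub}-\bw^*\rVert\le b\gamma$ for an amplification constant $b$ of order $\lVert C\rVert\lVert H^{-1}\rVert^2+\lVert H^{-1}\rVert$ --- and only afterwards substitutes $\gamma=\beta/b$. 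Under this substitution the admissible range becomes $\beta\le b/(2\lVert H^{-1}\rVert)$, which is precisely where the threshold $(2\lVert C\rVert\lVert H^{-1}\rVert+5)/8$ comes from, and the form of $h(\beta)$ (hypothesis invoked at scale $\beta/2b$, Hoeffding at scale $\beta/2b$) falls out simultaneously, with no residual constraint of the form $\beta\le1$. (The paper itself is slightly inconsistent between the $b$ of its proof and the $c$ of its statement, but the rescaling is what produces a threshold that grows with $\lVert C\rVert\lVert H^{-1}\rVert$.) As written, your argument proves the lemma only with the hypothesis replaced by $\beta\le\min\left\{1,\;(2\lVert C\rVert\lVert H^{-1}\rVert+5)/8\right\}$; to recover the stated range you need to restructure it as a perturbation-level bound followed by rescaling.
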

\begin{proof}
	Hoeffding inequality and union bound together imply that  with prob. $1 - d^2 \exp\left( - \frac{n\beta^2}{2d^2} \right) - d \exp\left( - \frac{n\beta^2}{2d} \right)$,
$$
\lVert \hat{H}_n - H\rVert \leq \beta, \lVert \hat{C}_n - C\rVert \leq \beta.
$$
Thus with prob $1 - g(\delta)$, $\lVert \hat{H}_n^{\sf pub} - H\rVert \leq \beta, \lVert \hat{C}_n^{\sf pub} - C\rVert \leq \beta$, where $g(\beta) = f(\beta / 2) + d^2 \exp\left( - \frac{n\beta^2}{8d^2} \right) + d \exp\left( - \frac{n\beta^2}{8d} \right)$

We further have
\begin{itemize}
	\item 	$\lVert  \hat{C}_n^{\sf pub} \rVert \leq \lVert  C - \hat{C}_n \rVert + \lVert  C \rVert \leq \lVert  C \rVert + \beta$
	\item $\lVert \left(\hat{H}_n^{\sf pub}\right)^{-1} - H^{-1} \rVert \leq \lVert \left(\hat{H}_n^{\sf pub}\right)^{-1}\rVert \lVert H^{-1} \rVert \cdot\lVert \hat{H}_n^{\sf pub} - H\rVert \leq \left(\lVert H^{-1} \rVert + \lVert \left(\hat{H}_n^{\sf pub}\right)^{-1} - H^{-1} \rVert\right)\cdot \lVert H^{-1} \rVert \cdot \beta$, which implies that when $\beta \leq \frac{1}{2\lVert H^{-1} \rVert}$, $\lVert \left(\hat{H}_n^{\sf pub}\right)^{-1} - H^{-1} \rVert \leq \frac{\lVert H^{-1} \rVert^2 \cdot \beta}{1 - \lVert H^{-1} \rVert \cdot \beta} \leq \frac{\lVert H^{-1} \rVert^2 \cdot \beta}{2}$.
	\item When $\beta \leq \frac{1}{2\lVert H^{-1} \rVert}$, 
		\begin{align*}
		\lVert \left(\hat{H}_n^{\sf pub}\right)^{-1}\hat{C}_n^{\sf pub} - H^{-1}C\rVert &\leq \lVert \hat{C}_n^{\sf pub} \rVert \cdot \lVert \left(\hat{H}_n^{\sf pub}\right)^{-1} - H^{-1} \rVert + \lVert H^{-1} \rVert \cdot \lVert \hat{C}_n^{\sf pub} - C\rVert \nonumber \\
		&\leq \left( \lVert  C \rVert +  \beta \right)\cdot  \frac{\lVert H^{-1} \rVert^2 \cdot \beta}{2} + \lVert H^{-1} \rVert \cdot \beta\\
		& \leq \frac{\left( 2\lVert  C \rVert \lVert  H^{-1} \rVert^2  + 5 \lVert  H^{-1} \rVert  \right)}{4}\beta.
		\end{align*}
\end{itemize}
Let $b:= \frac{\left( 2\lVert  C \rVert \lVert  H^{-1} \rVert^2  + 5 \lVert  H^{-1} \rVert  \right)}{4} $ and replace $\beta$ by $b^{-1}\beta$, we have that when $\beta \leq \frac{2\lVert  C \rVert \lVert  H^{-1} \rVert  + 5}{8}$
$$
\mathbb{P}_{X, \by\sim \calD, R_1, R_2}\left[\lVert\hat{\bw}_n - \bw^*\rVert \leq \beta\right] \geq 1 - h(\beta),
$$
where $h(\beta)=g(\beta / b)= f(\beta/2b) + d^2 \exp\left( - \frac{n\beta^2}{8b^2d^2} \right) + d \exp\left( - \frac{n\beta^2}{8b^2d} \right)$.
\end{proof}

\begin{lemma}
\label{lem:norm_tail}
    If $r$ is a random variable sampled from standard normal distribution, we have following concentration bound:
    $$
    \bbP\left[ |r| < \beta \right] \geq 1 - \frac{2}{\sqrt{2\pi}\beta}\exp\left(-\frac{\beta^2}{2}\right)
    $$
\end{lemma}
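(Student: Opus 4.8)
The plan is to prove this as a standard Gaussian tail (Mills-ratio) bound. First I would rewrite the claim in its equivalent complementary form: since $\bbP\left[ |r| < \beta \right] = 1 - \bbP\left[ |r| \geq \beta \right]$, it suffices to establish the upper tail bound
$$
\bbP\left[ |r| \geq \beta \right] \leq \frac{2}{\sqrt{2\pi}\beta}\exp\left(-\frac{\beta^2}{2}\right).
$$
Here I implicitly assume $\beta > 0$, which is the only regime in which the bound is nonvacuous (for $\beta \leq 0$ the right-hand side of the original statement is nonpositive and the inequality holds trivially).

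Next I would invoke the symmetry of the standard normal density about the origin to write $\bbP\left[ |r| \geq \beta \right] = 2\,\bbP\left[ r \geq \beta \right]$, reducing the problem to the one-sided tail $\bbP\left[ r \geq \beta \right] = \int_\beta^\infty \frac{1}{\sqrt{2\pi}} e^{-t^2/2}\,\diff t$. The key step is to multiply the integrand by the factor $t/\beta$, which is at least $1$ on the integration domain $t \geq \beta$; this only increases the integral, giving
$$
\bbP\left[ r \geq \beta \right] \leq \frac{1}{\beta\sqrt{2\pi}}\int_\beta^\infty t\, e^{-t^2/2}\,\diff t.
$$

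Finally I would evaluate this integral in closed form: the antiderivative of $t\,e^{-t^2/2}$ is $-e^{-t^2/2}$, so the integral equals $e^{-\beta^2/2}$, yielding $\bbP\left[ r \geq \beta \right] \leq \frac{1}{\beta\sqrt{2\pi}}e^{-\beta^2/2}$. Doubling via the symmetry step and substituting back into the complementary form gives exactly the claimed inequality. There is no real obstacle here — the entire argument hinges on the elementary but essential trick of inserting the factor $t/\beta$ to turn the Gaussian density into an exactly integrable expression; the rest is bookkeeping with symmetry and complementation.
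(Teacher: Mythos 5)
Your proof is correct, and it is the standard Mills-ratio argument: bound $\bbP[r\geq\beta]=\int_\beta^\infty \tfrac{1}{\sqrt{2\pi}}e^{-t^2/2}\,\diff t$ by inserting the factor $t/\beta\geq 1$, integrate $t\,e^{-t^2/2}$ exactly, and double by symmetry. The paper itself gives no argument here --- its ``proof'' is a one-line citation to page 2 of \cite{pollard15}, where precisely this tail bound is stated --- so what you have written is the standard derivation underlying that citation, and it makes the lemma self-contained; there is no substantive difference in method, only in that you supply the computation the paper outsources. One small caveat: your parenthetical claim that for $\beta\leq 0$ ``the right-hand side is nonpositive and the inequality holds trivially'' is wrong --- for $\beta<0$ the term $\frac{2}{\sqrt{2\pi}\beta}e^{-\beta^2/2}$ is negative, so the right-hand side exceeds $1$ while the left-hand side is $0$, and at $\beta=0$ the bound is undefined. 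The lemma is of course only meaningful (and only used in the paper) for $\beta>0$, which your main argument correctly assumes, so this slip is harmless but should be stated as a hypothesis rather than dismissed as a trivial case.
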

\begin{proof}
    It's shown in page 2 in \cite{pollard15}.
\end{proof}

\begin{lemma}
\label{lem:prod_two_norm}
    If $r_1, r_2$ are two independent random variables sampled from standard normal distribution, $r_1r_2$ can be written as $\frac{c_1 - c_2}{2}$, where $c_1, c_2$ are independent two random variables sampled from chi-squared with degree $1$. Moreover, $\sum_{i=1}^nr_{1, n}r_{2, n}$ can be written as $\frac{c_{1, 1:n} - c_{2, 1:n}}{2}$, where $c_{1, 1:n}, c_{2, 1:n}$ are independent two random variables sampled from chi-squared with degree $n$.
\end{lemma}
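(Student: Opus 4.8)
The plan is to exploit the polarization identity $r_1 r_2 = \tfrac14\big((r_1+r_2)^2 - (r_1-r_2)^2\big)$, which rewrites a product of two Gaussians as a difference of two squares. First I would introduce the linear combinations $s_1 := (r_1+r_2)/\sqrt2$ and $s_2 := (r_1-r_2)/\sqrt2$. Since $(s_1, s_2)$ is a linear image of the jointly Gaussian pair $(r_1, r_2)$, it is itself jointly Gaussian; a direct variance computation gives $\mathrm{Var}(s_1) = \mathrm{Var}(s_2) = 1$ and $\mathrm{Cov}(s_1, s_2) = \tfrac12\big(\mathrm{Var}(r_1) - \mathrm{Var}(r_2)\big) = 0$. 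Because uncorrelated jointly Gaussian variables are independent, $s_1$ and $s_2$ are independent standard normals. Substituting back, the identity becomes $r_1 r_2 = \tfrac12(s_1^2 - s_2^2)$, so setting $c_1 := s_1^2$ and $c_2 := s_2^2$ gives two independent $\chi^2_1$ variables with $r_1 r_2 = (c_1 - c_2)/2$, as claimed.

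For the sum I would apply the same change of variables coordinate-wise. Writing $s_{1,i} := (r_{1,i}+r_{2,i})/\sqrt2$ and $s_{2,i} := (r_{1,i}-r_{2,i})/\sqrt2$ for each $i$, the per-term identity gives $\sum_{i=1}^n r_{1,i} r_{2,i} = \tfrac12\big(\sum_{i=1}^n s_{1,i}^2 - \sum_{i=1}^n s_{2,i}^2\big)$. Defining $c_{1,1:n} := \sum_{i=1}^n s_{1,i}^2$ and $c_{2,1:n} := \sum_{i=1}^n s_{2,i}^2$, each is a sum of $n$ squared standard normals and hence $\chi^2_n$-distributed, and we obtain $\sum_{i=1}^n r_{1,i} r_{2,i} = (c_{1,1:n} - c_{2,1:n})/2$.

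The one step that needs genuine care—and which I expect to be the main, if modest, obstacle—is establishing the \emph{joint} independence of the whole family $\{s_{1,i}\}_{i=1}^n \cup \{s_{2,i}\}_{i=1}^n$, since this is what guarantees both that each of $c_{1,1:n}, c_{2,1:n}$ is genuinely $\chi^2_n$ and that the two are independent of each other. Within a single index $i$, the independence of $s_{1,i}$ and $s_{2,i}$ is the uncorrelated-Gaussian argument above; across distinct indices $i \neq j$, the pairs $(r_{1,i}, r_{2,i})$ and $(r_{1,j}, r_{2,j})$ are independent by assumption, so functions of them are independent. I would unify these observations by noting that the full vector $(s_{1,1}, \dots, s_{1,n}, s_{2,1}, \dots, s_{2,n})$ is a linear image of the independent Gaussian vector $(r_{1,1}, r_{2,1}, \dots, r_{1,n}, r_{2,n})$, computing that its covariance matrix is the identity, and concluding mutual independence from joint Gaussianity. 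This shows $c_{1,1:n}$ and $c_{2,1:n}$ are built from disjoint, independent blocks of standard normals, which completes the proof.
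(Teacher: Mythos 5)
Your proof is correct and follows essentially the same route as the paper's: the polarization identity $r_1 r_2 = \tfrac12\bigl(\bigl(\tfrac{r_1+r_2}{\sqrt2}\bigr)^2 - \bigl(\tfrac{r_1-r_2}{\sqrt2}\bigr)^2\bigr)$ together with the fact that the rotated pair is again a pair of independent standard normals. Your treatment is somewhat more careful than the paper's—in particular the explicit covariance computation and the argument for joint independence of the whole family $\{s_{1,i}, s_{2,i}\}$ across indices, which the paper leaves implicit—but the underlying idea is identical.
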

\begin{proof}
$r_1r_2 = \frac{ \left(\frac{r_1+ r_2}{\sqrt{2}}\right)^2 - \left(\frac{r_1- r_2}{\sqrt{2}}\right)^2}{2}$. Because $r_1, r_2$ are two independent standard normal random variables,  $\frac{r_1+ r_2}{\sqrt{2}}, \frac{r_1 - r_2}{\sqrt{2}}$ are two independent standard normal random variables as well. $c_1:=\frac{r_1+ r_2}{\sqrt{2}}$ and $c_2:=\frac{r_1- r_2}{\sqrt{2}}$ complete the proof for the first part.

$\sum_{i=1}^nr_{1, n}r_{2, n} = \frac{1}{2}\sum_{i=1}^n(c_{1, i} - c_{2, i}) =  \frac{1}{2}(\sum_{i=1}^nc_{1, i} - \sum_{i=1}^nc_{2, i})$. $c_{1, 1:n}:=\sum_{i=1}^nc_{1, i}$ and $c_{2, 1:n}:=\sum_{i=1}^nc_{2, i}$ finish the proof.
\end{proof}

\section{Proofs in Section 4}
We restate the assumptions and theorems for the completeness.
\begin{assumption}
$D_i$, $i=1,\cdots, n$, are i.i.d sampled from an underlying distribution $\calP$ over $\bbR^{d+1}$.
\end{assumption}

\begin{assumption}
	The absolute values of all attributes are bounded by $1$.
\end{assumption}

\begin{assumption}
	$\bbE_{(\bx, y)\sim \calP}\left[ \bx\bx^\top \right]$ is positive definite.
\end{assumption}

\begin{theorem}
	When $\beta\leq c$ for some variable $c$ that depends on $\sigma_{\varepsilon, \delta}$, $d$ and $\calP$, but independent of $n$,
	$$
	\bbP\left[ \lVert \hat{\bw}^{\sf dgm}_n - \bw^* \rVert > \beta \right] < 1 - \exp\left(-O\left( \beta^2 \frac{n}{\sigma_{\varepsilon, \delta}^4d^4} \right) + \tilde{O}(1)\right).
	$$
\end{theorem}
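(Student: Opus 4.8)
The plan is to invoke the reduction lemma, \autoref{lem:reduction}, which already converts control of the privatized statistics into control of the estimator: it suffices to exhibit a function $f$ such that the de-biased public Hessian and cross-covariance concentrate around their \emph{non-private} empirical counterparts, i.e. $\lVert \hat{H}_n^{\methodoneupper} - \hat{H}_n\rVert \le \beta$ and $\lVert \hat{C}_n^{\methodoneupper} - \hat{C}_n\rVert \le \beta$ with probability $1 - f(\beta)$, where $\hat{H}_n = \tfrac1n X^\top X$ and $\hat{C}_n = \tfrac1n X^\top Y$. Feeding this $f$ into \autoref{lem:reduction} yields the final tail $h(\beta)$; its two extra terms $d^2\exp(-\tilde{O}(n\beta^2/d^2))$ and $d\exp(-\tilde{O}(n\beta^2/d))$ (the Hoeffding concentration of $\hat{H}_n,\hat{C}_n$ around $H,C$) carry only $d$-powers in the denominator and are dominated by $f$, so the rate is governed entirely by $f$. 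The hypothesis $\beta\le c$ with $c$ independent of $n$ is exactly the range condition $\beta \le (2\lVert C\rVert\lVert H^{-1}\rVert + 5)/8$ of the reduction lemma.

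First I would expand the differences using $X^{\methodoneupper} = X + R_X$ and $Y^{\methodoneupper} = Y + R_Y$. For the Hessian,
\[
\hat{H}_n^{\methodoneupper} - \hat{H}_n = \tfrac1n\bigl(X^\top R_X + R_X^\top X\bigr) + \bigl(\tfrac1n R_X^\top R_X - 4d_{\rm max}\sigma_{\varepsilon,\delta}^2 I\bigr),
\]
so the subtracted $4d_{\rm max}\sigma_{\varepsilon,\delta}^2 I$ exactly cancels the mean of the quadratic noise term $\tfrac1n R_X^\top R_X$. The cross-covariance splits analogously into the mean-zero pieces $\tfrac1n X^\top R_Y$, $\tfrac1n R_X^\top Y$, and $\tfrac1n R_X^\top R_Y$. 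I would bound each term entrywise and pass to the operator norm via the Frobenius norm, which forces per-entry accuracy of order $\beta/d$ and costs a union-bound factor $d^2$ (resp. $d$ for the vector), contributing the $\tilde{O}(1)$ in the exponent.

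Then I would treat the two qualitatively different kinds of terms separately. The data$\times$noise terms such as $\tfrac1n X^\top R_X$ are, conditioned on $X$, Gaussian with per-entry variance at most $4d_{\rm max}\sigma_{\varepsilon,\delta}^2/n$ by \autoref{ass:bound}; the Gaussian tail of \autoref{lem:norm_tail} then gives an entrywise failure probability $\exp(-\tilde{O}(n\beta^2/(d^2 d_{\rm max}\sigma_{\varepsilon,\delta}^2)))$ at the required scale $\beta/d$. The genuinely heavy contribution is the de-biased quadratic term: writing each entry of $R_X$ as $2\sqrt{d_{\rm max}}\,\sigma_{\varepsilon,\delta}$ times a standard normal, a diagonal entry of $\tfrac1n R_X^\top R_X$ is $4d_{\rm max}\sigma_{\varepsilon,\delta}^2$ times an $n$-normalized chi-squared with $n$ degrees of freedom, while an off-diagonal entry (and every entry of $\tfrac1n R_X^\top R_Y$) is, by \autoref{lem:prod_two_norm}, $4d_{\rm max}\sigma_{\varepsilon,\delta}^2$ times a scaled difference of two independent chi-squared variables with $n$ degrees of freedom. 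A Chernoff bound using the chi-squared moment generating function together with the tailored logarithmic inequalities of \autoref{lem:mixed_lower_bound} yields an entrywise failure probability $\exp(-\tilde{O}(n\beta^2/(d^2 d_{\rm max}^2\sigma_{\varepsilon,\delta}^4)))$. Collecting everything by a union bound into $f$ and invoking \autoref{lem:reduction} produces the claimed exponent, whose dominant denominator $\sigma_{\varepsilon,\delta}^4 d^4$ (using $d_{\rm max}\le d$) comes from this term.

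The main obstacle is precisely this de-biased quadratic term. Its per-entry fluctuations scale like $d_{\rm max}\sigma_{\varepsilon,\delta}^2$ (variance of order $d_{\rm max}^2\sigma_{\varepsilon,\delta}^4$), so it — not the lighter-tailed data$\times$noise terms — sets the convergence rate and is the source of both the $\sigma_{\varepsilon,\delta}^4$ and the extra power of $d_{\rm max}$ in the denominator. Obtaining a clean exponential tail here is the technical heart of the argument: it requires the chi-squared reformulation of \autoref{lem:prod_two_norm} followed by a Chernoff bound whose exponent must be lower-bounded in both the small-deviation regime ($x\in[0,1]$) and the large-deviation regime ($x>1$) of \autoref{lem:mixed_lower_bound}. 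This is exactly the term that the subsequent \methodtwoshort{} method is engineered to make diminish rather than subtract.
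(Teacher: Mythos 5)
Your proposal is correct and takes essentially the same approach as the paper's own proof: the same reduction via \autoref{lem:reduction}, the same decomposition into data--noise cross terms and the de-biased quadratic terms, entrywise concentration with union bounds, Gaussian tails (\autoref{lem:norm_tail}) for the cross terms, and the chi-squared Chernoff treatment (\autoref{lem:prod_two_norm} combined with \autoref{lem:mixed_lower_bound}) for the quadratic terms, yielding the same dominant exponent $n\beta^2/(d^2 d_{\rm max}^2\sigma_{\varepsilon,\delta}^4)$ and then the stated bound with $d_{\rm max}\le d$. The only minor imprecision is attributing the constraint $\beta\le c$ solely to the reduction lemma's range condition $(2\lVert C\rVert\lVert H^{-1}\rVert+5)/8$: in the paper $c$ is the minimum of that quantity and $16\,b\,d\,d_{\rm max}\sigma_{\varepsilon,\delta}^2$ (the threshold keeping the chi-squared bound in its quadratic, small-deviation regime), which is precisely why $c$ depends on $\sigma_{\varepsilon,\delta}$ --- but since you explicitly handle both deviation regimes of \autoref{lem:mixed_lower_bound}, this does not affect the correctness of your argument.
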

\begin{proof}[Proof of \autoref{thm:dgm_utility}]
Denote $\left(\max_{j\in[m]}d_j\right)$ by $d_{\rm max}$. Denote $R\in \mathbb{R}^{n\times d}$ is a random matrix s.t. $R_{i, j}\sim \calN\left( 0, 4d_{\rm max} \sigma_{\varepsilon, \delta}^2 \right)$. We split $R$ into $R_X$ and $R_Y$ representing the addictive noise to $X$ and $Y$.
$$\hat{\bw}_n^{\sf dgm} = \left( \frac{1}{n}(X + R_X)^\top (X+R_X)+ (\lambda -4d_{\rm max} \sigma_{\varepsilon, \delta}^2) I\right)^{-1} \frac{(X+R_X)^\top (Y + R_Y)}{n}.$$
\begin{enumerate}
\item For any $i\in[d]$, $\frac{1}{4d_{\rm max} \sigma_{\varepsilon, \delta}^2}\left[R_X^\top R_X\right]_{i, i}$ is sampled from chi-square distribution with degree n. From the cdf of chi-square distribution, we have following concentration: 
\begin{align*}
\bbP\left[\left\lvert\left[\frac{1}{n}R_X^\top R_X\right]_{i, i} - 4d_{\rm max} \sigma_{\varepsilon, \delta}^2\right\rvert <\beta \right] &\geq 1-\exp\left(-n\cdot\left(\frac{\beta}{4d_{\rm max} \sigma_{\varepsilon, \delta}^2} - \log\left(1 + \frac{\beta}{4d_{\rm max} \sigma_{\varepsilon, \delta}^2}\right)\right)\right) \\
&- \exp\left(-n\cdot\left(-\log\left(1 - \frac{\beta}{4d_{\rm max} \sigma_{\varepsilon, \delta}^2}\right) - \frac{\beta}{4d_{\rm max} \sigma_{\varepsilon, \delta}^2}\right)\right).	
\end{align*}

Moreover, for $i\neq j$, \autoref{lem:prod_two_norm} implies that $\frac{1}{4d_{\rm max} \sigma_{\varepsilon, \delta}^2}\left[R_X^\top R_X\right]_{i, j}$ can be written as $\frac{c_{1, 1:n}- c_{2, 1:n}}{2}$, where $c_{1, 1:n}, c_{2, 1:n}$ are independent two random variables sampled from chi-squared with degree $n$. Thus 
\begin{align*}
    \bbP\left[\left\lvert\left[\frac{1}{n}R_X^\top R_X\right]_{i, j}\right\rvert <\beta \right] & = \bbP\left[\left\lvert\frac{c_{1, 1:n}- c_{2, 1:n}}{2n}\right\rvert <\frac{\beta}{4d_{\rm max} \sigma_{\varepsilon, \delta}^2} \right]\\
    &\geq \bbP\left[\left\lvert c_{1, 1:n} - n\right\rvert < \frac{n\beta}{4d_{\rm max} \sigma_{\varepsilon, \delta}^2}, \left\lvert c_{1, 1:n} - n\right\rvert < \frac{n\beta}{4d_{\rm max} \sigma_{\varepsilon, \delta}^2}\right] \\
    &\geq 1 - 2\bbP\left[\left\lvert c_{1, 1:n} - n\right\rvert \geq \frac{n\beta}{4d_{\rm max} \sigma_{\varepsilon, \delta}^2} \right] \\
    & \geq 1 - 2\exp\left(-n\cdot\left(\frac{\beta}{4d_{\rm max} \sigma_{\varepsilon, \delta}^2} - \log\left(1 + \frac{\beta}{4d_{\rm max} \sigma_{\varepsilon, \delta}^2}\right)\right)\right) \\
    & - 2\exp\left(-n\cdot\left(-\log\left(1 - \frac{\beta}{4d_{\rm max} \sigma_{\varepsilon, \delta}^2}\right) - \frac{\beta}{4d_{\rm max} \sigma_{\varepsilon, \delta}^2}\right)\right)
\end{align*}

Union bound implies that
\begin{align*}
\bbP\left[ \left\lVert\frac{1}{n}R_X^\top R_X - 4d_{\rm max} \sigma_{\varepsilon, \delta}^2 \cdot I\right\rVert \leq \beta_1 \right] &\geq 1 - d^2\cdot \exp\left(-n\cdot\left(\frac{\beta_1}{4dd_{\rm max} \sigma_{\varepsilon, \delta}^2} - \log\left(1 + \frac{\beta_1}{4dd_{\rm max} \sigma_{\varepsilon, \delta}^2}\right)\right)\right)\\
&  - d^2\cdot\exp\left(-n\cdot\left(-\log\left(1 - \frac{\beta_1}{4dd_{\rm max} \sigma_{\varepsilon, \delta}^2}\right) - \frac{\beta_1}{4dd_{\rm max} \sigma_{\varepsilon, \delta}^2}\right)\right)
\end{align*}
\item $\bbP\left[ \left\lVert\frac{X^\top R_X}{n} \right\rVert \leq \beta_2 \right] \geq 1 - \frac{4\sigma_{\varepsilon, \delta} d^{3}d_{\rm max}^{1/2}}{\sqrt{2\pi n}\beta_2}\exp\left(-\frac{n\beta_2^2}{8d^2d_{\rm max} \sigma_{\varepsilon, \delta}^2} \right)$, implied by \autoref{lem:norm_tail}.

\item $\bbP\left[ \left\lVert\frac{X^\top R_Y}{n} \right\rVert \leq \beta_3 \right] \geq 1 - \frac{4\sigma_{\varepsilon, \delta} d^{3/2}d_{\rm max}^{1/2}}{\sqrt{2\pi n}\beta_3}\exp\left(-\frac{n\beta_3^2}{8dd_{\rm max} \sigma_{\varepsilon, \delta}^2} \right)$, implied by \autoref{lem:norm_tail}.

\item $\bbP\left[ \left\lVert\frac{R_X^\top Y}{n} \right\rVert \leq \beta_4 \right] \geq 1 - \frac{4\sigma_{\varepsilon, \delta} d^{3/2}d_{\rm max}^{1/2}}{\sqrt{2\pi n}\beta_4}\exp\left(-\frac{n\beta_4^2}{8dd_{\rm max} \sigma_{\varepsilon, \delta}^2} \right)$, implied by  \autoref{lem:norm_tail}.

\item Similar to 1,
\begin{align*}
	\bbP\left[ \left\lVert\frac{R_X^\top R_Y}{n} \right\rVert \leq \beta_5 \right] &\geq 1 - 2d\exp\left(-n\cdot\left(\frac{\beta_5}{4d^{1/2}d_{\rm max} \sigma_{\varepsilon, \delta}^2} - \log\left(1 + \frac{\beta_5}{4d^{1/2}d_{\rm max} \sigma_{\varepsilon, \delta}^2}\right)\right)\right) \\
	&- 2d\exp\left(-n\cdot\left(-\log\left(1 - \frac{\beta_5}{4d^{1/2}d_{\rm max} \sigma_{\varepsilon, \delta}^2}\right) - \frac{\beta_5}{4d^{1/2}d_{\rm max} \sigma_{\varepsilon, \delta}^2}\right)\right)	
\end{align*}

\end{enumerate}

One can simplify $ - \log\left(1 - x\right) - x$ and $x - \log\left(1 + x\right)$ by \autoref{lem:mixed_lower_bound}.
Set $\beta_1=\frac{1}{2}\beta,~\beta_2 = \frac{1}{4}\beta, ~\beta_3=\beta_4 = \frac{1}{4}\beta, ~\beta_5=\frac{1}{2}\beta$.
The above concentrations together imply that when  $\beta < 8dd_{\rm max} \sigma_{\varepsilon, \delta}^2$, $\lVert \hat{H}_n^{\sf pub} - \hat{H}_n \rVert\leq\beta, \lVert \hat{C}_n^{\sf pub} - \hat{C}_n \rVert\leq\beta$ with prob at least $1 - f(\beta)$, where $f(\beta) = \exp\left(-\min\left\{ O\left(n\cdot \frac{\beta^2}{{d^2d_{\rm max}^2 \sigma_{\varepsilon, \delta}^4}}\right) + \tilde{O}(1)\right\}\right)$.

With the application of \autoref{lem:reduction}: when $\beta \leq \frac{2\lVert  C \rVert \lVert  H^{-1} \rVert  + 5}{8}$
$$
\mathbb{P}_{X, \by\sim \calD, R_1, R_2}\left[\lVert\hat{\bw}_n - \bw^*\rVert \leq \beta \right] \geq 1 - h(\beta),
$$
where $h(\beta)$ is:
\begin{enumerate}
    \item when $\beta < 16bdd_{\rm max} \sigma_{\varepsilon, \delta}^2$, $h(\beta) = \exp\left(- O\left(n\cdot \frac{\beta^2}{{d^2d_{\rm max}^2 \sigma_{\varepsilon, \delta}^4}}\right) + \tilde{O}(1)\right)$;
    \item when $\beta \geq 16bdd_{\rm max} \sigma_{\varepsilon, \delta}^2$, $h(\beta) = \exp\left(-O\left(n\cdot \frac{\beta^2}{d^2d_{\rm max} \sigma_{\varepsilon, \delta}^2}\right) + \tilde{O}(1)\right);$
\end{enumerate}
where $b=\frac{\left( 2\lVert  C \rVert \lVert  H^{-1} \rVert^2  + 5 \lVert  H^{-1} \rVert  \right)}{4}$ is a distribution dependent constant.
In the other word, when $\beta\leq \min\left\{ 16bdd_{\rm max} \sigma_{\varepsilon, \delta}^2, \frac{2\lVert  C \rVert \lVert  H^{-1} \rVert  + 5}{8} \right\}$, 
$$
\mathbb{P}_{X, \by\sim \calD, R_1, R_2}\left[\lVert\hat{\bw}_n - \bw^*\rVert \leq \beta \right] \geq 1 - \exp\left(- O\left(n\cdot \frac{\beta^2}{{d^2d_{\rm max}^2 \sigma_{\varepsilon, \delta}^4}}\right) + \tilde{O}(1)\right),
$$

\end{proof}

\begin{theorem}
	When $\beta\leq c$ for some variable $c$ that depends on $d$ and $\calP$, but independent of $n$ and $\sigma_{\varepsilon, \delta}$, $\bbP\left[ \lVert \bw^{\methodtwoupper{}}_n - \bw^* \rVert >\beta \right] < 
	\exp\left(- O\left(\min\left\{ \frac{k\beta^2}{d^2}, \frac{n\beta}{kd^2\sigma_{\varepsilon, \delta}^2}, \frac{n^{1/2}\beta}{d^{3/2} \sigma_{\varepsilon, \delta}}\right\}\right) + \tilde{O}(1) \right)$.
	If we take $k=O\left(\frac{(nd)^{1/2}}{d_{\rm max}^{1/2}\sigma_{\varepsilon, \delta}}\right)$,
    $
	\bbP\left[ \lVert \hat{\bw}^{\methodtwoupper{}}_n - \bw^* \rVert >\beta \right] < \exp\left(- \frac{n^{1/2}\beta}{d^{3/2}d_{\rm max}^{1/2}\sigma_{\varepsilon, \delta}} \cdot O\left(\min\left\{ 1, \beta \right\}\right) + \tilde{O}(1) \right).
	$
\end{theorem}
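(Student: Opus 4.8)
The plan is to route the whole argument through the reduction lemma \autoref{lem:reduction}, exactly as in the proof of \autoref{thm:dgm_utility}. Since scaling both factors of the ordinary least squares solution by $1/n$ leaves $\hat{\bw}_n^{\methodtwoupper}$ unchanged, it suffices to control the gap between the public quantities $\hat{H}_n^{\methodtwoupper}:=\frac{1}{n}(X^{\methodtwoupper})^\top X^{\methodtwoupper}$, $\hat{C}_n^{\methodtwoupper}:=\frac{1}{n}(X^{\methodtwoupper})^\top Y^{\methodtwoupper}$ and their non-private empirical counterparts $\hat{H}_n=\frac{1}{n}X^\top X$, $\hat{C}_n=\frac{1}{n}X^\top Y$. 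If I can show $\lVert\hat{H}_n^{\methodtwoupper}-\hat{H}_n\rVert\le\beta$ and $\lVert\hat{C}_n^{\methodtwoupper}-\hat{C}_n\rVert\le\beta$ with failure probability $f(\beta)$, then \autoref{lem:reduction} immediately upgrades this to the desired weight bound with $h(\beta)=f(\beta/2c)+d^2\exp(-\Theta(n\beta^2/(c^2d^2)))+d\exp(-\Theta(n\beta^2/(c^2d)))$, where $c$ is the distribution-dependent constant of the lemma and also sets the admissible range $\beta\le c$. The two Hoeffding tails in $h$ decay at rate $\Theta(n)$ in the exponent, which is faster than anything $f$ will contribute, so the final rate is governed entirely by $f(\beta/2c)$, and the rescaling $\beta\mapsto\beta/2c$ only rebrands the distribution-dependent constants hidden in $O(\cdot)$.

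Thus everything reduces to producing $f$. I use the three-term split of $\hat{H}_n^{\methodtwoupper}-\hat{H}_n$ displayed in \autoref{sec:rmgm}, together with the analogous split of $\hat{C}_n^{\methodtwoupper}-\hat{C}_n$ obtained by replacing one copy of $X$ with $Y$, allocate a constant fraction of $\beta$ to each piece, and union bound. The Johnson--Lindenstrauss term $\frac{1}{n}(X^\top\frac{B^\top B}{k}X-X^\top X)$ is handled entrywise by \autoref{lem:jl} applied to the set of the $d+1$ columns of $[X,Y]$ with norm bound $s^2=n$: every entry is distorted by at most $4\beta n$, so demanding accuracy $\beta/(4d)$ per entry yields operator-norm error $\le\beta$ with failure $\exp(-\Theta(k\beta^2/d^2))$, the $(d+1)^2$ prefactor going into the $\tilde{O}(1)$. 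For the cross term, conditioning on $B$ turns $\frac{1}{n}X^\top\frac{B^\top}{\sqrt{k}}R_X$ into $\frac{1}{n}\tilde{X}^\top R_X$ with $\tilde{X}=BX/\sqrt{k}$, whose entries are centered Gaussians of variance at most $4d_{\rm max}\sigma_{\varepsilon,\delta}^2\lVert X_{:,i}\rVert^2/n^2\le 4d_{\rm max}\sigma_{\varepsilon,\delta}^2/n$; \autoref{lem:norm_tail} at threshold $\beta/d$ then gives a sub-Gaussian tail of order $\exp(-\Theta(n\beta^2/(d^2d_{\rm max}\sigma_{\varepsilon,\delta}^2)))$, which is the square of the third entry $n^{1/2}\beta/(dd_{\rm max}^{1/2}\sigma_{\varepsilon,\delta})$ of the stated minimum and, as noted below, never ends up binding.

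The delicate piece, and the one I expect to be the main obstacle, is the noise term $\frac{1}{n}R_X^\top R_X$. In contrast to \methodoneshort{} it is not removed by a de-biasing subtraction; it is small purely because $k=o(n)$. Its diagonal entries are $\frac{4d_{\rm max}\sigma_{\varepsilon,\delta}^2}{n}\chi^2_k$ with mean $\frac{4kd_{\rm max}\sigma_{\varepsilon,\delta}^2}{n}$, and its off-diagonal entries decompose via \autoref{lem:prod_two_norm} into a difference of two $\chi^2_k$ variables. To force this term below $\beta/d$ I need the $\chi^2$ upper tail at a threshold $\frac{n\beta}{4kdd_{\rm max}\sigma_{\varepsilon,\delta}^2}$ that sits far above the mean $k$; this is the large-deviation regime, so I must invoke part 3 of \autoref{lem:mixed_lower_bound} ($x-\log(1+x)\ge x/2$ for $x>1$) rather than its quadratic part. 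This is precisely what produces the linear-in-$\beta$ exponent $\Theta(n\beta/(kdd_{\rm max}\sigma_{\varepsilon,\delta}^2))$ and what makes the requirement $k=o(n)$ visible in the bound. Collecting the three failure probabilities gives the first display, $f(\beta)=\exp(-O(\min\{k\beta^2/d^2,\,n\beta/(kdd_{\rm max}\sigma_{\varepsilon,\delta}^2),\,n^{1/2}\beta/(dd_{\rm max}^{1/2}\sigma_{\varepsilon,\delta})\})+\tilde{O}(1))$.

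Finally I optimize $k$. The third term is independent of $k$, so $k$ only trades off the increasing $k\beta^2/d^2$ against the decreasing $n\beta/(kdd_{\rm max}\sigma_{\varepsilon,\delta}^2)$; equating the two at the $\beta=\Theta(1)$ scale gives $k=\Theta((nd)^{1/2}/(d_{\rm max}^{1/2}\sigma_{\varepsilon,\delta}))$. Substituting this choice turns the first two terms into $\frac{n^{1/2}}{d^{3/2}d_{\rm max}^{1/2}\sigma_{\varepsilon,\delta}}$ multiplied by $\beta^2$ and by $\beta$ respectively, while the third term carries only $d$ in place of $d^{3/2}$ and is therefore never the minimizer; hence the minimum equals $\frac{n^{1/2}\beta}{d^{3/2}d_{\rm max}^{1/2}\sigma_{\varepsilon,\delta}}\cdot O(\min\{1,\beta\})$, which is the second display and completes the argument.
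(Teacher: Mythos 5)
Your overall route is the paper's own: reduce everything to concentration of $\hat{H}_n^{\methodtwoupper{}}$ and $\hat{C}_n^{\methodtwoupper{}}$ around $\hat{H}_n$, $\hat{C}_n$ via \autoref{lem:reduction}, treat the JL term, the cross terms, and the quadratic noise term separately with a union bound, and choose $k$ to balance the first two rates; the JL step and the final optimization of $k$ coincide with the paper exactly. Where you deviate from the paper there is one genuine flaw: in the cross term you assert that, conditioned on $B$, the entries of $\frac{1}{n}\tilde{X}^\top R_X$ with $\tilde{X}=BX/\sqrt{k}$ are Gaussian with variance at most $4d_{\rm max}\sigma_{\varepsilon,\delta}^2\lVert X_{:,i}\rVert^2/n^2$. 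This presumes $\lVert BX_{:,i}/\sqrt{k}\rVert\leq\lVert X_{:,i}\rVert$ for every realization of $B$, which is false: $B^\top B/k$ is not a contraction, and if, say, every row of $B$ equals $\mathrm{sign}(X_{:,i})^\top$ then $\lVert BX_{:,i}\rVert^2/k=\lVert X_{:,i}\rVert_1^2$, which can be of order $n\lVert X_{:,i}\rVert^2$, inflating your conditional variance by a factor of $n$ and destroying the claimed $\exp\left(-\Theta\left(n\beta^2/(d^2d_{\rm max}\sigma_{\varepsilon,\delta}^2)\right)\right)$ tail. The repair lives inside your own argument: first restrict to the JL norm-preservation event $\lVert BX_{:,i}/\sqrt{k}\rVert^2\leq\lVert X_{:,i}\rVert^2+4\beta' n$ for a constant $\beta'$ (apply \autoref{lem:jl} with $\mathbf{u}=\mathbf{v}=X_{:,i}$; failure probability $(d+2)^2\exp(-\Theta(k))$, absorbed since $k\geq k\beta^2/d^2$), and only then invoke the Gaussian tail. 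Note this is genuinely different from the paper's treatment, which conditions on the entries of $BY/n$ being at most $\alpha$ (Hoeffding over Rademacher sums) and optimizes $\alpha$; that argument is what produces the $n^{1/2}\beta/(dd_{\rm max}^{1/2}\sigma_{\varepsilon,\delta})$ entry of the min, whereas your repaired version gives a strictly better cross-term rate, making that entry non-binding — harmless, since a larger exponent only strengthens the bound.

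A second, but conservative, inaccuracy: for $\frac{1}{n}R_X^\top R_X$ you claim the $\chi^2_k$ large deviation (part 3 of \autoref{lem:mixed_lower_bound}) "produces" the exponent $\Theta\left(n\beta/(kdd_{\rm max}\sigma_{\varepsilon,\delta}^2)\right)$. What it actually produces is $\Theta(k\cdot x)$ where $x=\Theta\left(n\beta/(kdd_{\rm max}\sigma_{\varepsilon,\delta}^2)\right)$ is the deviation relative to the mean, i.e. an exponent $\Theta\left(n\beta/(dd_{\rm max}\sigma_{\varepsilon,\delta}^2)\right)$ with no $k$ at all — you dropped the degree factor multiplying the rate function. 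Since your claimed exponent is the smaller one, the stated inequality still holds; incidentally, the paper's $k$-dependent entry of the min does not come from chi-squared tails but from its cruder bound $\lVert R_X^\top R_X\rVert\leq\lVert R_X\rVert_F^2$ with an entrywise union bound over all $kd$ Gaussian entries. Neither of these points affects the endgame: after the conditioning fix, your argument establishes the theorem, with a min that is entrywise at least as large as the paper's, and the substitution $k=O\left((nd)^{1/2}/(d_{\rm max}^{1/2}\sigma_{\varepsilon,\delta})\right)$ yields the second display exactly as you describe.
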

\begin{proof}[Proof of \autoref{thm:rmgm_utility}]
\begin{align*}
	\hat{\bw}_n^{\methodtwoupper{}} = \left(\frac{1}{n}\left(X^\top\frac{B^\top B}{k}X + X^\top \frac{B^\top}{\sqrt{k}} R_X + R_X^\top \frac{B}{\sqrt{k}}X + R_X^{\top}R_X\right)\right)^{-1}\left(\frac{1}{n}\left(X^{\top}\frac{B^\top B}{k}Y + R_X^{\top}\frac{B}{\sqrt{k}}Y + X^\top\frac{B^\top}{\sqrt{k}}R_Y + R_X^{\top}R_Y\right)\right)
\end{align*}

Define $M:=\frac{1}{\sqrt{k}}B$. Then we can make the analysis one by one.
\begin{enumerate}
	\item JL-lemma applied by Bernoulli random variables implies that with probability $1 - (d+2)^2\exp\left( -k\left(\frac{\beta_1^2}{64d} - \frac{\beta_1^3}{96d\sqrt{d}} \right)\right)$,
		$$
		\left\lVert\frac{1}{n}X^{\top}M^{\top}MY - \frac{1}{n}X^{\top}Y\right\rVert \leq \beta_1.
		$$
		\begin{proof}
			    \autoref{lem:jl} implies that with prob $1 - (d+2)^2\exp\left( k\left(\frac{\beta^2}{4} - \frac{\beta^3}{6} \right)\right)$, for any $u, v\in\{ X^{\top}_i|i\in [d]\} \cup \{Y\}$,
				$$
					\frac{(Mu)^{\top}Mv}{n} - 4\beta \leq \frac{u^{\top}v}{n} \leq \frac{(Mu)^{\top}Mv}{n}  + 4\beta. 
				$$
				This further implies that
				$$
				\left\lVert\frac{1}{n}X^{\top}M^{\top}MY - \frac{1}{n}X^{\top}Y\right\rVert \leq 4\sqrt{d}\beta.
				$$
				$\beta_1 = 4\sqrt{d}\beta$ helps finish the proof.
		\end{proof}
		
	\item JL-lemma applied by Bernoulli random variables implies that with probability $1 - (d+2)^2\exp\left( k\left(-\frac{\beta_2^2}{64d^2} - \frac{\beta_2^3}{96d^3} \right)\right)$,
		$$
		\left\lVert\frac{1}{n}X^{\top}M^{\top}M X - \frac{1}{n}X^{\top}X\right\rVert \leq \beta_2.
		$$
	
	\item With prob. $1-2kd\sqrt{\frac{2kdd_{\rm max} \sigma_{\varepsilon, \delta}^2}{\pi n\beta_3}}\exp\left( -n\frac{\beta_3}{8kdd_{\rm max}\sigma_{\varepsilon, \delta}^2}\right)$, $\left\lVert \frac{R_X^{\top}R_X}{n} \right\rVert\leq \beta_3$.
	\begin{proof}
		To simplify the proof, let's assume $R_X$ is a standard gaussian matrix. Because $\bbP \left(\left\lvert (R_X)_{ij}\rvert\right)\leq \beta \right) \geq 1 - \frac{2}{\sqrt{2\pi}\beta}\exp\left( -\beta^2/2\right)$ shown in \autoref{lem:norm_tail}, 
		$$
		\bbP \left( \lVert R_X^{\top}R_X \rVert \leq kd\beta  \right) \geq \bbP \left( \lVert R_X \rVert  \leq \sqrt{kd\beta} \right)  \geq 1 - \frac{2kd}{\sqrt{ 2\pi \beta}} \exp\left( -\beta/2\right).
		$$
		It's equivalent that
		$$
		\bbP \left( \left\lVert \frac{R_X^{\top}R_X}{n} \right\rVert_2\leq \beta_3 \right) \geq 1-2kd\sqrt{\frac{kd}{2\pi n\beta_3}} \exp\left( -\frac{n}{2kd}\beta_3\right).
		$$
		Plug-in the variance of $R_X$ leads to the targeted inequality.
	\end{proof}

	\item With prob. $1 - 2\left(\frac{1}{\sqrt{\pi \sigma_{\varepsilon, \delta}\beta_4 \sqrt{nd_{\rm max}}}} d^{5/4} + 2kd\right)\cdot\exp\left(-\sqrt{n}\cdot\frac{\beta_4}{4\sigma_{\varepsilon, \delta} \sqrt{dd_{\rm max}}}\right)$
		$$
		\left\lVert\frac{R_X^{\top}BY}{n\sqrt{k}}\right\rVert \leq \beta_4, \left\lVert\frac{X^{\top}B^\top R_Y}{n\sqrt{k}}\right\rVert \leq \beta_4
		$$
		\begin{proof}
		Denote $\mathbf{c}:=\frac{R_X^{\top}BY}{n\sqrt{k}}$ and further $\mathbf{c}_i := \frac{\left(\left(R_X\right)_i\right)^\top \mathbf{b}}{\sqrt{k}}$, where $\left(R_x\right)_i$ is the $i$th column for $R_X$ and $\mathbf{b} = \frac{BY}{n}$. 
		\begin{align*}
			\bbP\left[ |\bc_i|\leq \beta  \right] &= \int_{\mathbf{b}}\bbP\left[ |\bc_i|\leq \beta |\mathbf{b} \right] \bbP[\mathbf{b}] d\mathbf{b}\\
			& \geq \max_{\alpha > 0}\int_{|\mathbf{b}|\leq \alpha \cdot \mathbf{1}}\bbP\left[ |\bc_i|\leq \beta |\mathbf{b} \right] \bbP[\mathbf{b}] d\mathbf{b} \\
			& \geq \max_{\alpha > 0}\int_{|\mathbf{b}|\leq \alpha \cdot \mathbf{1}}\bbP\left[ |\bc_i|\leq \beta | |\mathbf{b}| = \alpha \cdot \mathbf{1} \right] \bbP[\mathbf{b}] d\mathbf{b} \\
			& \geq \max_{\alpha > 0}\bbP \left[ |\bc_i|\leq \beta |\mathbf{b}| = \alpha \cdot \mathbf{1} \right] \bbP[|\mathbf{b}|\leq \alpha \cdot \mathbf{1}] \\
			& \geq \max_{\alpha > 0}\left(1 - \frac{4\alpha\sigma_{\varepsilon, \delta}\sqrt{d_{\rm max}}}{\sqrt{2\pi}\beta}\exp\left(-\frac{\beta^2}{8\alpha^2\sigma_{\varepsilon, \delta}^2d_{\rm max}}\right) - 2k\exp\left( -n\cdot \frac{4\alpha^2\sigma_{\varepsilon, \delta}^2d_{\rm max}}{2} \right)\right) \\
			& \geq 1 - \left(\frac{1}{\sqrt{\pi \sigma_{\varepsilon, \delta}\beta \sqrt{d_{\rm max}n}}} + 2k\right)\cdot\exp\left(-\sqrt{n}\cdot\frac{\beta}{4\sqrt{d_{\rm max}}\sigma_{\varepsilon, \delta}}\right) \ \ \ \ //\alpha^2 = \frac{\beta}{2\sqrt{nd_{\rm max}}\sigma_{\varepsilon, \delta}}.
		\end{align*}
		Then
		$$
		\bbP\left[ \lVert\mathbf{c}\rVert \leq \beta  \right] \geq 1 - \sum_{i=1}^d\bbP\left[ \lVert\mathbf{c}_i\rVert > \frac{\beta}{\sqrt{d}}  \right] \geq 1 - \left(\frac{1}{\sqrt{\pi \sigma_{\varepsilon, \delta}\beta \sqrt{d_{\rm max}n}}} d^{5/4} + 2kd\right)\cdot\exp\left(-\sqrt{n}\cdot\frac{\beta}{4\sigma_{\varepsilon, \delta} \sqrt{dd_{\rm max}}}\right).
		$$
		Similarly, 
		$$
		\bbP\left[ \left\lVert\frac{X^{\top}B^\top R_Y}{n\sqrt{k}}\right\rVert \leq \beta  \right] \geq 1 - \left(\frac{1}{\sqrt{\pi \sigma_{\varepsilon, \delta}\beta \sqrt{nd_{\rm max}}}} d^{5/4} + 2kd\right)\cdot\exp\left(-\sqrt{n}\cdot\frac{\beta}{4\sigma_{\varepsilon, \delta} \sqrt{dd_{\rm max}}}\right).
		$$
		\end{proof}
		Union bound gives the conclusion.
		
	\item With prob. $1 - 2\left(\frac{1}{\sqrt{\pi \sigma_{\varepsilon, \delta}\beta_5 \sqrt{d_{\rm max}n}}}d^{5/2} + 2kd^2\right)\cdot\exp\left(-\sqrt{n}\cdot\frac{\beta_5}{4\sigma_{\varepsilon, \delta} d d_{\rm max}^{1/2}}\right)$
		$$ 
		\left\lVert\frac{R_X^{\top}BX}{n\sqrt{k}}\right\rVert \leq \beta_5,
		$$
		which is implied similar to 4.
		
	\item With prob. $1-2k(d+1)\sqrt{\frac{2kd^{1/2}d_{\rm max} \sigma_{\varepsilon, \delta}^2}{\pi n\beta_6}}\exp\left( -n\frac{\beta_6}{8kd^{1/2}d_{\rm max}\sigma_{\varepsilon, \delta}^2}\right)$, $\left\lVert \frac{R_X^{\top}R_Y}{n} \right\rVert\leq \beta_6$.
	\begin{proof}
		To simplify the proof, let's assume $R_X$ and $R_Y$ is a standard gaussian matrix first. Because $\bbP \left(\left\lvert (R_X)_{ij}\rvert\right)\leq \beta \right) \geq 1 - \frac{2}{\sqrt{2\pi}\beta}\exp\left( -\beta^2/2\right)$ shown in \autoref{lem:norm_tail}, 
		$$
		\bbP \left( \lVert R_X^{\top}R_Y \rVert \leq k\sqrt{d}\beta  \right) \geq \bbP \left( \lVert R_X \rVert  \leq \sqrt{kd\beta}, \lVert R_Y \rVert  \leq \sqrt{k\beta} \right)  \geq 1 - \frac{2k(d + 1)}{\sqrt{ 2\pi \beta}} \exp\left( -\beta/2\right).
		$$
		It's equivalent that
		$$
		\bbP \left( \left\lVert \frac{R_X^{\top}R_Y}{n} \right\rVert \leq \beta_6 \right) \geq 1-2k(d+1)\sqrt{\frac{k\sqrt{d}}{2\pi n\beta_6}} \exp\left( -\frac{n}{2k\sqrt{d}}\beta_6\right).
		$$
		Plug-in the variance of $R_X$ and $R_Y$ leads to the targeted inequality.
	\end{proof}

\end{enumerate}

Define $\hat{H}_n^{\methodtwoupper{}}:=\frac{1}{n}\left(X^\top\frac{A^\top A}{k}X + X^\top \frac{A^\top}{\sqrt{k}} R_X + R_X^\top \frac{A}{\sqrt{k}}X + \frac{1}{n}R_X^{\top}R_X\right)$, $\hat{H}_n:=\frac{X^{\top}X}{n}$, $\hat{C}_n^{\methodtwoupper{}} = \frac{1}{n}\left(X^{\top}\frac{A^\top A}{k}Y + R_X^{\top}\frac{A}{\sqrt{k}}Y + X\frac{A}{\sqrt{k}}R_Y + R_X^{\top}R_Y\right)$, $\hat{C}_n = \frac{1}{n}X^{\top}Y$.

The above analysis implies that, with prob. 
$$1  - (d+2)^2\exp\left( -k\left(\frac{\beta_1^2}{64d} - \frac{\beta_1^3}{96d\sqrt{d}} \right)\right)
- (d+2)^2\exp\left( -k\left(\frac{\beta_2^2}{64d^2} - \frac{\beta_2^3}{96d^3} \right)\right)
$$
$$
- 2kd\sqrt{\frac{2kdd_{\rm max} \sigma_{\varepsilon, \delta}^2}{\pi n\beta_3}}\exp\left( -n\frac{\beta_3}{8kdd_{\rm max}\sigma_{\varepsilon, \delta}^2}\right)
-  \left(\frac{1}{\sqrt{\pi \sigma_{\varepsilon, \delta}\beta_4 \sqrt{nd_{\rm max}}}} d^{5/4} + 2kd\right)\cdot\exp\left(-\sqrt{n}\cdot\frac{\beta_4}{4\sigma_{\varepsilon, \delta}\sqrt{dd_{\rm max}}}\right)
$$
$$
 - \left(\frac{1}{\sqrt{\pi \sigma_{\varepsilon, \delta}\beta_5 \sqrt{nd_{\rm max}}}}d^{5/2} + 2kd^2\right)\cdot\exp\left(-\sqrt{n}\cdot\frac{\beta_5}{4\sigma_{\varepsilon, \delta} dd_{\rm max}^{1/2}}\right) 
 - 2k(d+1)\sqrt{\frac{2kd^{1/2}d_{\rm max} \sigma_{\varepsilon, \delta}^2}{\pi n\beta_6}}\exp\left( -n\frac{\beta_6}{8kd^{1/2}d_{\rm max}\sigma_{\varepsilon, \delta}^2}\right)
$$
we have
$$
\lVert \hat{H}_n^{\methodtwoupper{}} - \hat{H}_n\rVert \leq \beta_2 + \beta_3 + 2\beta_5, \lVert \hat{B}_n^{\methodtwoupper{}} - \hat{B}_n\rVert \leq \beta_1 + 2\beta_4 + \beta_6.
$$
Let $\beta_2 = \frac{2\beta}{3}$, $\beta_3=\frac{1}{6}$,  $\beta_5 = \frac{\beta}{12}$ and $\beta_1=\beta_4 = \beta_6= \frac{\beta}{4}.$ We will have $\lVert \hat{H}_n^{\sf pub} - \hat{H}_n\rVert \leq \beta, \lVert \hat{C}_n^{\sf pub} - \hat{C}_n\rVert \leq \beta$, with prob. $1-f(\beta), ~\forall \beta \leq 4\sqrt{d}$ (implies $\beta_1 \leq \sqrt{d}$ and $\beta_2 \leq d$), where 
$$f(\beta) = (d + 2)^2 \exp\left(-\frac{k\beta^2}{768d}\right) 
+ (d + 2)^{2} \exp\left(-\frac{k\beta^2}{432d^2}\right) $$
$$
+ 2kd\sqrt{\frac{12kdd_{\rm max} \sigma_{\varepsilon, \delta}^2}{\pi n\beta}}\exp\left( -n\frac{\beta}{48kdd_{\rm max}\sigma_{\varepsilon, \delta}^2}\right)
+  \left(\frac{2}{\sqrt{\pi \sigma_{\varepsilon, \delta}\beta \sqrt{nd_{\rm max}}}} d^{5/4} + 2kd\right)\cdot\exp\left(-\sqrt{n}\cdot\frac{\beta}{16\sigma_{\varepsilon, \delta}\sqrt{dd_{\rm max}}}\right)
$$
$$
 + \left(\frac{2\sqrt{3}}{\sqrt{\pi \sigma_{\varepsilon, \delta}\beta \sqrt{nd_{\rm max}}}}d^{5/2} + 2kd^2\right)\cdot\exp\left(-\sqrt{n}\cdot\frac{\beta}{48\sigma_{\varepsilon, \delta} dd_{\rm max}^{1/2}}\right) 
 + 2k(d+1)\sqrt{\frac{8kd^{1/2}d_{\rm max} \sigma_{\varepsilon, \delta}^2}{\pi n\beta}}\exp\left( -n\frac{\beta}{48kd^{1/2}d_{\rm max}\sigma_{\varepsilon, \delta}^2}\right)
$$
\autoref{lem:reduction} implies that for $\beta\leq \min\{8b\sqrt{d}, \frac{2\lVert  C \rVert \lVert  H^{-1} \rVert  + 5}{8} \}$, we have 
$$
\mathbb{P}\left[\lVert\hat{\bw}_n - \bw^*\rVert \leq \beta\right] \geq 1 - h(\beta),
$$
where $h(\beta)$ is:
$$
h(\beta) = \exp\left(-\min\left\{ O\left(\frac{k\beta^2}{d^2}\right),  O\left(n\frac{\beta}{kdd_{\rm max}\sigma_{\varepsilon, \delta}^2}\right), O\left(n^{1/2}\frac{\beta}{dd_{\rm max}^{1/2} \sigma_{\varepsilon, \delta}}\right)\right\} + \tilde{O}(1) \right),
$$
where $\tilde{O}(1)$ includes $log$ terms of $n, d, d_{\rm max}, k, \beta$. If we take $k=O\left(\frac{(nd)^{1/2}}{d_{\rm max}^{1/2}\sigma_{\varepsilon, \delta}}\right)$,
	\begin{align*}
    h(\beta)=\exp\left(- \frac{n^{1/2}\beta}{d^{3/2}d_{\rm max}^{1/2}\sigma_{\varepsilon, \delta}} \cdot O\left(\min\left\{ 1, \beta \right\}\right) + \tilde{O}(1) \right).
	\end{align*}
\end{proof}

\end{document}